\documentclass[11pt,leqno]{amsart}
\usepackage[english]{babel}
\usepackage{amsfonts,amssymb,graphicx,amscd,amsmath,amsthm,hyperref,tikz,booktabs,tabularx,multirow,multicol,stmaryrd}
\usepackage{enumitem}
\usepackage[flushleft]{threeparttable}
\usepackage[fixamsmath]{mathtools}
\oddsidemargin 0in \evensidemargin 0in \textwidth 6.6in
\allowdisplaybreaks

\newtheorem{theorem}{Theorem}[section]

\theoremstyle{definition}

\newtheorem{definition}[theorem]{Definition}

\newtheorem{problem}{Problem}
\newtheorem{remark}[theorem]{Remark}

\newcommand{\Z}{\mathbb Z}

\newcommand{\C}{\mathbb C}

\newcommand{\GL}{\operatorname{GL}}
\newcommand{\Sp}{\operatorname{Sp}}

\newcommand{\z}{\mathbf{z}}

\newcommand{\gammaice}[7]{
\begin{tikzpicture}[scale=0.80, every node/.style={scale=0.80}]
\coordinate (a) at (-.75, 0);
\coordinate (b) at (0, .75);
\coordinate (c) at (.75, 0);
\coordinate (d) at (0, -.75);
\coordinate (aa) at (-.75,.5);
\coordinate (cc) at (.75,.5);
\coordinate (ee) at (0,0);
\draw[semithick] (a)--(c);
\draw[semithick] (b)--(d);
\draw[fill=white, very thin] (a) circle (.25);
\draw[fill=white, very thin] (b) circle (.25);
\draw[fill=white, very thin] (c) circle (.25);
\draw[fill=white, very thin] (d) circle (.25);
\path[fill=white] (ee) circle (.30);
\node at (0,1) { };
\node at (a) {$#1$};
\node at (b) {$#2$};
\node at (c) {$#3$};
\node at (d) {$#4$};
\node at (aa) {$#5$};
\node at (cc) {$#6$};
\node at (ee) {$#7$};
\end{tikzpicture}}

\newcommand{\lhs}[9]{\raisebox{-49pt}{\resizebox{5.4cm}{4.5cm}{\begin{tikzpicture}
\path[use as bounding box](-1,-2) rectangle(5,3);
\coordinate (a) at (0,-1);     
\coordinate (b) at (0,-1.5);   
\coordinate (c) at (0,1);      
\coordinate (d) at (0,1.5);    %
\coordinate (e) at (3,2);      
\coordinate (f) at (4,1);      
\coordinate (g) at (4,1.5);    %
\coordinate (h) at (4,-1);     
\coordinate (i) at (4,-1.5);   %
\coordinate (j) at (3,-2);     
\coordinate (k) at (2,1);      
\coordinate (l) at (2,1.5);    %
\coordinate (m) at (2,-1);     
\coordinate (n) at (2,-1.5);   %
\coordinate (o) at (3,0);      
\coordinate (p) at (1,0);      
\coordinate (q) at (3,1);      %
\coordinate (r) at (3,-1);     %
\draw[semithick] (a) to [out=0, in=180] (k);
\draw[semithick] (k)--(f);
\draw[semithick] (c) to [out=0, in=180] (m);
\draw[semithick] (m)--(h);
\draw[semithick] (e)--(j);
\draw[fill=white, very thin] (a) circle (.25);
\draw[fill=white, very thin] (c) circle (.25);
\draw[fill=white, very thin] (e) circle (.25);
\draw[fill=white, very thin] (f) circle (.25);
\draw[fill=white, very thin] (h) circle (.25);
\draw[fill=white, very thin] (j) circle (.25);
\draw[fill=white, very thin] (k) circle (.25);
\draw[fill=white, very thin] (m) circle (.25);
\draw[fill=white, very thin] (o) circle (.25);
\path[fill=white] (p) circle (.45);
\path[fill=white] (q) circle (.35);
\path[fill=white] (r) circle (.35);
\node at (a) {$#1$};
\node at (c) {$#2$};
\node at (e) {$#3$};
\node at (f) {$#4$};
\node at (h) {$#5$};
\node at (j) {$#6$};
\node at (k) {$#7$};
\node at (m) {$#8$};
\node at (o) {$#9$};
\node at (p) {$R^{X Y}_{z_1,z_2}$};
\node at (q) {$R^{X}_{z_1}$};
\node at (r) {$R^{Y}_{z_2}$};
\end{tikzpicture}}}}

\newcommand{\rhs}[9]{\raisebox{-49pt}{
\resizebox{5.4cm}{4.5cm}
{\begin{tikzpicture}
\path[use as bounding box](-1,-2) rectangle(5,3);
\coordinate (a) at (0,-1); 
\coordinate (b) at (0,1);  
\coordinate (c) at (1,2);  
\coordinate (d) at (4,1);  
\coordinate (e) at (4,-1); 
\coordinate (f) at (1,-2); 
\coordinate (g) at (2,-1); 
\coordinate (h) at (2,1);  
\coordinate (i) at (1,0);  
\coordinate (r) at (3,0);  
\coordinate (s) at (1,-1);  
\coordinate (t) at (1,1);  
\draw[semithick] (a)--(g) to [out=0, in=180] (d);
\draw[semithick] (b)--(h) to [out=0, in=180] (e);
\draw[semithick] (c)--(f);
\draw[fill=white, very thin] (a) circle (.25);
\draw[fill=white, very thin] (b) circle (.25);
\draw[fill=white, very thin] (c) circle (.25);
\draw[fill=white, very thin] (d) circle (.25);
\draw[fill=white, very thin] (e) circle (.25);
\draw[fill=white, very thin] (f) circle (.25);
\draw[fill=white, very thin] (g) circle (.25);
\draw[fill=white, very thin] (h) circle (.25);
\draw[fill=white, very thin] (i) circle (.25);
\path[fill=white] (r) circle (.45);
\path[fill=white] (s) circle (.35);
\path[fill=white] (t) circle (.35);
\node at (a) {$#1$};
\node at (b) {$#2$};
\node at (c) {$#3$};
\node at (d) {$#4$};
\node at (e) {$#5$};
\node at (f) {$#6$};
\node at (g) {$#7$};
\node at (h) {$#8$};
\node at (i) {$#9$};
\node at (r) {$R^{X Y}_{z_1, z_2}$};
\node at (s) {$R^X_{z_1}$};
\node at (t) {$R^{Y}_{z_2}$};
\end{tikzpicture}}}}

\newcommand{\lhsgd}[9]{\raisebox{-49pt}{\resizebox{5.4cm}{4.5cm}{\begin{tikzpicture}
\path[use as bounding box](-1,-2) rectangle(5,3);
\coordinate (a) at (0,-1);     
\coordinate (b) at (0,-1.5);   
\coordinate (c) at (0,1);      
\coordinate (d) at (0,1.5);    %
\coordinate (e) at (3,2);      
\coordinate (f) at (4,1);      
\coordinate (g) at (4,1.5);    %
\coordinate (h) at (4,-1);     
\coordinate (i) at (4,-1.5);   %
\coordinate (j) at (3,-2);     
\coordinate (k) at (2,1);      
\coordinate (l) at (2,1.5);    %
\coordinate (m) at (2,-1);     
\coordinate (n) at (2,-1.5);   %
\coordinate (o) at (3,0);      
\coordinate (p) at (1,0);      
\coordinate (q) at (3,1);      %
\coordinate (r) at (3,-1);     %
\draw[semithick] (a) to [out=0, in=180] (k);
\draw[semithick] (k)--(f);
\draw[semithick] (c) to [out=0, in=180] (m);
\draw[semithick] (m)--(h);
\draw[semithick] (e)--(j);
\draw[fill=white, very thin] (a) circle (.25);
\draw[fill=white, very thin] (c) circle (.25);
\draw[fill=white, very thin] (e) circle (.25);
\draw[fill=white, very thin] (f) circle (.25);
\draw[fill=white, very thin] (h) circle (.25);
\draw[fill=white, very thin] (j) circle (.25);
\draw[fill=white, very thin] (k) circle (.25);
\draw[fill=white, very thin] (m) circle (.25);
\draw[fill=white, very thin] (o) circle (.25);
\path[fill=white] (p) circle (.45);
\path[fill=white] (q) circle (.35);
\path[fill=white] (r) circle (.35);
\node at (a) {$#1$};
\node at (c) {$#2$};
\node at (e) {$#3$};
\node at (f) {$#4$};
\node at (h) {$#5$};
\node at (j) {$#6$};
\node at (k) {$#7$};
\node at (m) {$#8$};
\node at (o) {$#9$};
\node at (p) {$R^{\Gamma \Delta}_{z_1,z_2}$};
\node at (q) {$R^{\Gamma}_{z_1}$};
\node at (r) {$R^{\Delta}_{z_2}$};
\end{tikzpicture}}}}

\newcommand{\rhsgd}[9]{\raisebox{-49pt}{\resizebox{5.4cm}{4.5cm}
{\begin{tikzpicture}
\path[use as bounding box](-1,-2) rectangle(5,3);
\coordinate (a) at (0,-1); 
\coordinate (b) at (0,1);  
\coordinate (c) at (1,2);  
\coordinate (d) at (4,1);  
\coordinate (e) at (4,-1); 
\coordinate (f) at (1,-2); 
\coordinate (g) at (2,-1); 
\coordinate (h) at (2,1);  
\coordinate (i) at (1,0);  
\coordinate (r) at (3,0);  
\coordinate (s) at (1,-1);  
\coordinate (t) at (1,1);  
\draw[semithick] (a)--(g) to [out=0, in=180] (d);
\draw[semithick] (b)--(h) to [out=0, in=180] (e);
\draw[semithick] (c)--(f);
\draw[fill=white, very thin] (a) circle (.25);
\draw[fill=white, very thin] (b) circle (.25);
\draw[fill=white, very thin] (c) circle (.25);
\draw[fill=white, very thin] (d) circle (.25);
\draw[fill=white, very thin] (e) circle (.25);
\draw[fill=white, very thin] (f) circle (.25);
\draw[fill=white, very thin] (g) circle (.25);
\draw[fill=white, very thin] (h) circle (.25);
\draw[fill=white, very thin] (i) circle (.25);
\path[fill=white] (r) circle (.45);
\path[fill=white] (s) circle (.35);
\path[fill=white] (t) circle (.35);
\node at (a) {$#1$};
\node at (b) {$#2$};
\node at (c) {$#3$};
\node at (d) {$#4$};
\node at (e) {$#5$};
\node at (f) {$#6$};
\node at (g) {$#7$};
\node at (h) {$#8$};
\node at (i) {$#9$};
\node at (r) {$R^{\Gamma \Delta}_{z_1, z_2}$};
\node at (s) {$R^\Gamma_{z_1}$};
\node at (t) {$R^{\Delta}_{z_2}$};
\end{tikzpicture}}}}

\newcommand{\botcharge}[2]{\raisebox{-29pt}{$\begin{array}{cc}#1\\\scriptstyle #2\end{array}$}}
\newcommand{\topcharge}[2]{\raisebox{23pt}{$\begin{array}{cc}\scriptstyle #2\\#1\end{array}$}}

\newcommand{\XYtable}[9]{%
\begin{tikzpicture}[scale=0.5, every node/.style={scale=0.8, inner sep=0, outer sep=0}, baseline=-3.5]
\coordinate (a) at (-.75, -.75);
\coordinate (b) at (-.75, .75);
\coordinate (c) at (.75, .75);
\coordinate (d) at (.75, -.75);
\coordinate (aa) at (-.75, -1.32);
\coordinate (bb) at (-.75, 1.32);
\coordinate (cc) at (.75, 1.32);
\coordinate (dd) at (.75, -1.32);
\coordinate (ee) at (0,0);
\draw[semithick] (a) to [out=0, in=180] (c);
\draw[semithick] (b) to [out=0, in=180] (d);
\draw[fill=white, very thin] (a) circle (.3);
\draw[fill=white, very thin] (b) circle (.3);
\draw[fill=white, very thin] (c) circle (.3);
\draw[fill=white, very thin] (d) circle (.3);
\path[fill=white] (ee) circle (.52);
\node at (a) {$#1$};%
\node at (b) {$#2$};%
\node at (c) {$#3$};%
\node at (d) {$#4$};%
\node at (aa) {$#5$};%
\node at (bb) {$#6$};%
\node at (cc) {$#7$};%
\node at (dd) {$#8$};%
\node at (ee) {$#9$};%
\end{tikzpicture}
}

\theoremstyle{definition}
\newtheorem*{dbc}{Lattice and Boundary Spins Assignment}

\begin{document}

\title[Duality for Metaplectic Ice]{Duality for Metaplectic Ice}
\author{Ben Brubaker}
\address{School of Mathematics, University of Minnesota, Minneapolis, MN 55455}
\email{brubaker@math.umn.edu}
\author{Valentin Buciumas}
\address{Einstein Institute of Mathematics, Edmond J. Safra Campus, Givat Ram, The Hebrew University of Jerusalem, Jerusalem, 91904, Israel}
\email{valentin.buciumas@gmail.com}
\author{Daniel Bump}
\address{Department of Mathematics, Stanford University, Stanford, CA 94305-2125}
\email{bump@math.stanford.edu}
\author{Nathan Gray}
\address{Department of Mathematics, Mount Holyoke College, South Hadley, MA 01075}
\email{ngray@mtholyoke.edu}

\subjclass[2010]{Primary 20C08; Secondary 11F68, 16T20, 16T25, 22E50}
\keywords{ice models, Weyl group multiple Dirichlet series, R-matrix, quantum group}
\begin{abstract}
We interpret values of spherical Whittaker functions on metaplectic covers of the
general linear group over a nonarchimedean local field as partition functions of 
two different solvable lattice models. We prove the equality of these two 
partition functions by showing the commutativity of transfer matrices associated to different models via the Yang-Baxter equation.
\end{abstract}
\maketitle

\section{Introduction}

Let $F$ be a nonarchimedean local field containing sufficiently many $2 n$-th
roots of unity, and let $\GL (r, F)^{(n)}$ refer to an $n$-fold
metaplectic cover. We will use the term ``metaplectic ice'' to refer to
explicit realizations of spherical Whittaker functions on these groups as
partition functions of solvable
lattice models. See~{\cite{wmd5book}}, {\cite{mice}}, {\cite{BBB}} for
more information about these models. The purpose of this paper is to prove 
that two such realizations are equivalent.

To explain the issue let us consider the simplest case
$n = 1$, so that the statement is really about spherical Whittaker functions
for the reductive group $\GL (r, F)$.
If $\lambda = (\lambda_1, \ldots, \lambda_r)$ is an
element of the coweight lattice $\Z^r$, we say $\lambda$ is
\textit{dominant} if $\lambda_1 \geqslant \cdots \geqslant \lambda_r$. Let $\varpi$
be a prime element of $F$, let $q$ be the residue field cardinality,
and let $\varpi^{\lambda}$ denote $\mathrm{diag}
(\varpi^{\lambda_1}, \ldots, \varpi^{\lambda_r})$.

In this $n = 1$ case, Whittaker models of principal series representations are
unique, but the spherical Whittaker function admits multiple explicit inductive descriptions 
according to the order in which unipotent integrations are performed (corresponding to a chain
of parabolic subgroups from a chosen Borel subgroup up to $G$). Let us fix an
unramified principal series representation whose character
is given by Langlands parameters $\mathbf{z} = (z_1, \ldots, z_r)$. Two different means of evaluation produce, to each dominant coweight $\lambda$, a pair of formulas for the spherical Whittaker function:
\begin{equation} W_\Gamma(\varpi^\lambda) = \sum_{v \in
  \mathcal{B}_{\lambda+\rho}} H_\Gamma(v; \mathbf{z}), 
  \qquad 
  W_\Delta(\varpi^\lambda) = \sum_{v \in \mathcal{B}_{\lambda+\rho}} H_\Delta(v;
  \mathbf{z}).
\label{gammadeltawhittaker}
\end{equation}
Here $\lambda$ is a partition of length $\leqslant r$, regarded as a dominant weight, 
$\rho$ is the the Weyl vector, and $\mathcal{B}_{\lambda+\rho}$ is
the crystal with highest weight $\lambda+\rho$.
Since both are formulas for the spherical Whittaker function at the same point
for the same principal series, they are of course equal.  Furthermore,
according to the Shintani-Casselman-Shalika formula, both sums are thus equal
to:
\begin{equation}
  \label{whittakerschur} W (\varpi^{\lambda}) = \prod_{\alpha \in \Phi^+}
  (1 - q^{- 1} \mathbf{z}^{\alpha}) s_{\lambda} (z_1, \ldots, z_r)
\end{equation}
where $s_{\lambda}$ is the Schur polynomial.

Tokuyama~\cite{Tokuyama} proved that the function $\sum_{v} H_\Gamma(v;
\mathbf{z})$ is equal to $(1 - q^{- 1} \mathbf{z}^{\alpha}) s_{\lambda}
(z_1, \ldots, z_r)$, and a similar argument can be given for $\sum_{v}
H_\Delta(v; \mathbf{z})$. But one may then dispense with the intermediate step involving the
Whittaker function and ask for a direct proof that
$W_\Gamma(\varpi^\lambda)=W_\Delta(\varpi^\lambda)$. This
is a special case of the problem we will solve in this paper.

A \textit{lattice model} is a statistical-mechanical model on a (for us, two-dimensional) lattice such as those
studied in {\cite{Baxter}}. \textit{Boltzmann weights} are attached to each state as a product of local
Boltzmann weights at each vertex, and then the \textit{partition function} 
is the sum of the Boltzmann weights of all such states.
The model is termed \textit{solvable} if
it is amenable to study by means of the Yang-Baxter equation, as in {\cite{Baxter}}, Chapter~10. The term is apt
because one may often solve explicitly for the value of the partition function in such cases. 

It was realized by Hamel and King~\cite{HamelKing} that Tokuyama's formula
can be rewritten as the partition function of an ``ice-type'' solvable lattice
model. The states of the model are in bijection with a subset
of $\mathcal{B}_{\lambda+\rho}$. This rewriting of $H_\Gamma$ is
elementary, but significant since it makes available a different
tool, the Yang-Baxter equation. In \cite{hkice}, it was shown that both functions in
(\ref{gammadeltawhittaker}) are representable in terms of
lattice models and that Yang-Baxter equations gave both new
proofs of Tokuyama's formula and of the equality of the two functions.
Demonstrating this latter equality is difficult, as no bijection matching
summands of the $W_\Gamma$ and $W_\Delta$ exists (see {\cite{wmd5book}}
for details).

In the present paper ($n \geqslant 1$), the solvability of the model is reflected in functional equations for the partition function under the action
of the Weyl group on the spectral (Langlands) parameters $\mathbf{z}$. For the $n = 1$
case, this is equivalent to the permutation symmetry of the Schur polynomial $s_{\lambda}$.

For general $n$, one can still write values of Whittaker functions on the metaplectic $n$-cover of $\operatorname{GL}(r,F)$ as partition functions of two models with different Boltzmann weights \cite{mice}. 
We refer to them as \textit{Gamma ice} and 
\textit{Delta ice} in accordance with the labels in (\ref{gammadeltawhittaker}). Precise definitions of the models and their weights
are given in subsequent sections, but the above is enough to follow the present discussion. As we will explain, the use of
Yang-Baxter equations allows us to efficiently solve the following two problems for metaplectic covers of the general linear group.

\begin{problem}
  \label{probone}Prove, for any choice of dominant weight $\lambda$, functional equations for this pair of partition functions corresponding to the Weyl
  group action on the spectral (Langlands) parameters~$z_i$.
\end{problem}

\begin{problem}
  \label{probtwo} Prove, for any choice of dominant weight $\lambda$, that the two partition functions, ``Gamma ice'' and ``Delta ice'', are equal 
  (without using the fact that the models represent metaplectic Whittaker functions).
\end{problem}

If $n = 1$, both Problems~\ref{probone} and~\ref{probtwo} have
been solved using the Yang-Baxter equation. (See Chapter~19 of
{\cite{wmd5book}}, or the arxiv version of {\cite{hkice}}.)
It is natural to expect this for general $n$, since both
problems may be expressed in terms of the commutation
properties of {\textit{row transfer matrices}} for the models,
and as in {\cite{Baxter}}, Chapter 10, the Yang-Baxter equation
is a natural tool for such commutativity results. So the two problems described above were set out in {\cite{mice}}, where
lattice models were given representing a spherical metaplectic
Whittaker function, generalizing Gamma and Delta ice to $n>1$.
It was proposed that the Yang-Baxter equation could be used
this way. In retrospect we understand that the lattice models set out in
{\cite{mice}} were not solvable, even though they gave the right partition
functions. So (at that time) it was not possible to carry out the proof.

In {\cite{BBB}}, the first three authors found modifications of these models
that give the same partition function (for combinatorial reasons) and these
models \textit{are} solvable. The Yang-Baxter equations required to solve
Problem~\ref{probone} involve the $R$-matrix for modules of a certain quantum group, a
Drinfeld twist of the quantum affine Lie superalgebra $U_{q^{-1/2}}
(\widehat{\mathfrak{gl}} (n| 1))$. If $n = 1$, these models are the same as
those in {\cite{mice}}, otherwise they are slightly different, and the
modification is essential for the Yang-Baxter equation. See also \cite{BBBF},
where the Drinfeld twisting is further explained, and applications to
representations of Hecke algebras are given.

Problem~\ref{probone} is solved in~\cite{BBB} for Gamma ice. Even though Delta ice is not mentioned in that paper, the solution for Delta ice is similar. 
Another way to think about Problem~\ref{probone} is in the setting of Chinta and Gunnells {\cite{ChintaGunnells}}. Their results may be understood as giving a solution
to Problem~\ref{probone}, though not involving the Yang-Baxter equation.
Considerable subtlety is involved in their definition of an action of the Weyl
group on spectral parameters, related to the scattering matrix of the
intertwining operators for principal series on the metaplectic group, computed
earlier in~{\cite{KazhdanPatterson}}.

Regarding Problem~\ref{probtwo}, in \cite{wmd5book}, the equivalence of Gamma
and Delta ice was used to prove the analytic continuation and functional
equations of certain multiple Dirichlet series, but its significance goes
beyond this application. The phenomenon that two systems can
have the same (or closely related) partition functions occurs frequently in physics.
For example Kramers-Wannier~\cite{KramersWannier} duality relates the partition functions
of the low- and high-temperature Ising models. Or in the dualities
of string theory, different theories may represent the same universe,
and this is reflected in the equality of their partition functions.
See for example~\cite{BeckerBeckerSchwartz}, or for a specific
case \cite{GirardelloEtAl} (6.31). We believe the relationship
between Gamma and Delta ice is analogous to such dualities.

A solution to Problem~2 was given in {\cite{wmd5book}}. This gave a direct
connection between the representations of metaplectic Whittaker functions by
Gamma ice and Delta ice. Again, this proof did not depend on the Yang-Baxter
equation. The partition function is equal to a sum over Gelfand-Tsetlin
patterns, and after a first reduction of the problem, a combinatorial
statement was given that would solve Problem~\ref{probtwo}. This fact, called
\textit{Statement B}, is equivalent to the commutativity of two transfer
matrices, associated with Gamma ice and Delta ice.

Now the proof of Statement~B in {\cite{wmd5book}} was extremely intricate, and
far more complex than a proof using the Yang-Baxter equation would be. We may
use the language of crystals to describe it. The Gamma and Delta ice
descriptions can be translated into sums over crystal bases, and the
Sch{\"u}tzenberger involution of the crystal roughly matches terms of
these two sums. The crystal may be embedded as the set of lattice points in a
polytope. For elements in the interior of the polytope, terms corresponding by
the Sch{\"u}tzenberger involution are exactly equal. But unfortunately
the contributions of terms on the boundary do not exactly match; rather
different terms must be grouped together into small ``packets'' in order to
see that the two sums are equal. In complicated cases the packets become
chaotically larger and are not simple to describe and compare. Thus in a
particular case the required identity can be checked, but the combinatorics of
a complete proof seem very perplexing. The difficulties could ultimately be
overcome, and although the details are fascinating, the proof is very long.
Later \cite{FriedbergZhang} gave another application of these techniques.

Thus a proof of Statement~B using the Yang-Baxter equation is highly
desirable, but was not available when {\cite{wmd5book}} was written.
Now it is possible to give such a proof. In this paper, we give a proof of
commutativity of transfer matrices for Gamma and Delta ice which is equivalent
to Statement~B (Theorem \ref{theorem:Bprime}). The proof is based on new
Yang-Baxter equations, relying on the results of versions of the Yang-Baxter
equation from {\cite{BBB}} and {\cite{Graythesis}}. We then show how to use
this result to solve Problem~2 (Theorem \ref{theorem:Aprime}).

There is another context in which Gamma and Delta ice occur. In
\cite{wmd5book}, the Gamma and Delta ice models relate two
systems, one made entirely of Gamma ice, the other of Delta
ice. But Gamma and Delta ice also appear together in models for
$p$-adic Whittaker functions of Type C. In these models,
layers of Gamma and Delta ice are arranged in alternating
rows. See {\cite{Ivanov, BBCG}} for the $n=1$ case, where such models are
applied, respectively to Whittaker functions on odd orthogonal groups, and on
the metaplectic double covers of $\Sp (2 r, F)$, where $F$ is a nonarchimedean
local field. This work was generalized in~\cite{Graythesis} to general~$n$.
Several $R$-matrices and Yang-Baxter equations are needed to solve
Problem~\ref{probtwo}, and these are the same ones that are
needed for these problems with Type~C Whittaker functions.

The fact that both Delta and Gamma ice appear simultaneously in the settings
mentioned above suggests there may be a quantum group whose modules admit
endomorphisms using both flavors of ice. More explicitly, we seek a
quasitriangular Hopf algebra $H$ with a universal $R$-matrix
$\mathcal{R} \in H \otimes H$ and two parametrized families of modules
$V_{\Delta}$ and $V_{\Gamma}$ such that the action of $\mathcal{R}$ on
$V_{X} \otimes V_{Y}$ equals $R^{X Y}$ for $X, Y$ any combination of
$\Gamma$ and $\Delta$. In Section \ref{section:YangBaxterSystem} we prove that
the matrices $R^{\Gamma \Gamma}, (R^{\Delta \Gamma})^{-1}, R^{\Gamma \Delta}$
and $(R^{\Delta \Delta})^\ddagger$ form a Yang-Baxter system in the sense of
\cite{HlavatySnobl,HlavatyAlgebraic,HlavatyApplications}, where the involution
$\ddagger$ is defined in Section~\ref{section:YangBaxterSystem}. Moreover, the
endomorphisms $(R^{\Gamma \Delta})^\ddagger$ and $(R^{\Delta \Gamma})^{-1}$
agree up to a scalar multiple. Using a theorem of Freidel and
Maillet~\cite{FreidelMaillet}, these statements imply the existence of such a
quantum group via a parametrized version of the FRT construction.
(More precisely they produce a dual quasitriangular Hopf algebra,
with $V_\Delta$ and $V_\Gamma$ as comodules.) This observation is preliminary and so
will say nothing about its representation theory but perhaps it may be thought
of as a kind of quantum double.

\textbf{Acknowledgements:} This work was supported by NSF grants DMS-1406238 (Brubaker), 
DMS-1601026 (Bump) and by the Max Planck Institute for Mathematics (Buciumas). 

\section{Metaplectic ice}

The models for metaplectic ice are six-vertex models
with boundary conditions generalizing the so-called ``domain wall boundary''
as used in \cite{Izergin,KuperbergASM}.
Each edge in the model is assigned a {\it spin} ``$+$'' or ``$-$'' according to the familiar
ice rule resulting in six allowable configurations; these appear in
Figure~\ref{mweights}. We begin by assigning a rectangular lattice and
its boundary spins according to a partition $\lambda$ with
$r$ parts (some of which may be 0) as follows:

\begin{dbc} Given such a partition $\lambda = (\lambda_1, \lambda_2, \ldots, \lambda_r)$, we form a lattice with $r$ rows and $M := \lambda_1+r$ columns and define boundary conditions of a system as follows:
\begin{itemize}
\item the left and bottom edges have spin $+$;
\item the right edges have spin $-$;
\item the top edges in columns $\lambda_1+r-1$, $\lambda_2+r-2$, \ldots, $\lambda_r$ have spin $-$ while the rest of the top edges have spin $+$.  
\end{itemize}
\end{dbc}

An \textit{admissible state} of the model is then an assignment of spins to the interior edges such that each vertex has adjacent edges in one of the six allowable configurations; we
will present two types of ice, $\Gamma$ and $\Delta$, whose allowable configurations of spins are presented in Figure~\ref{mweights}.
Figure~\ref{fig:mice} gives an example of an admissible state for a ($\Gamma$-ice) model whose boundary conditions are dictated by the partition $\lambda = (3,2,0)$.

\begin{figure}[h]
\begin{tikzpicture}[scale=0.7, every node/.style={scale=0.8}]
  \coordinate (ab) at (1,0);
  \coordinate (ad) at (3,0);
  \coordinate (af) at (5,0);
  \coordinate (ah) at (7,0);
  \coordinate (aj) at (9,0);
  \coordinate (al) at (11,0);
  \coordinate (ba) at (0,1);
  \coordinate (bc) at (2,1);
  \coordinate (be) at (4,1);
  \coordinate (bg) at (6,1);
  \coordinate (bi) at (8,1);
  \coordinate (bk) at (10,1);
  \coordinate (bm) at (12,1);
  \coordinate (baUP) at (0,1.5);
  \coordinate (bcUP) at (2,1.5);
  \coordinate (beUP) at (4,1.5);
  \coordinate (bgUP) at (6,1.5);
  \coordinate (biUP) at (8,1.5);
  \coordinate (bkUP) at (10,1.5);
  \coordinate (bmUP) at (12,1.5);

  \coordinate (cb) at (1,2);
  \coordinate (cd) at (3,2);
  \coordinate (cf) at (5,2);
  \coordinate (ch) at (7,2);
  \coordinate (cj) at (9,2);
  \coordinate (cl) at (11,2);
  \coordinate (da) at (0,3);
  \coordinate (dc) at (2,3);
  \coordinate (de) at (4,3);
  \coordinate (dg) at (6,3);
  \coordinate (di) at (8,3);
  \coordinate (dk) at (10,3);
  \coordinate (dm) at (12,3);
  \coordinate (daUP) at (0,3.5);
  \coordinate (dcUP) at (2,3.5);
  \coordinate (deUP) at (4,3.5);
  \coordinate (dgUP) at (6,3.5);
  \coordinate (diUP) at (8,3.5);
  \coordinate (dkUP) at (10,3.5);
  \coordinate (dmUP) at (12,3.5);

  \coordinate (eb) at (1,4);
  \coordinate (ed) at (3,4);
  \coordinate (ef) at (5,4);
  \coordinate (eh) at (7,4);
  \coordinate (ej) at (9,4);
  \coordinate (el) at (11,4);
  \coordinate (fa) at (0,5);
  \coordinate (fc) at (2,5);
  \coordinate (fe) at (4,5);
  \coordinate (fg) at (6,5);
  \coordinate (fi) at (8,5);
  \coordinate (fk) at (10,5);
  \coordinate (fm) at (12,5);
  \coordinate (faUP) at (0,5.5);
  \coordinate (fcUP) at (2,5.5);
  \coordinate (feUP) at (4,5.5);
  \coordinate (fgUP) at (6,5.5);
  \coordinate (fiUP) at (8,5.5);
  \coordinate (fkUP) at (10,5.5);
  \coordinate (fmUP) at (12,5.5);

  \coordinate (gb) at (1,6);
  \coordinate (gd) at (3,6);
  \coordinate (gf) at (5,6);
  \coordinate (gh) at (7,6);
  \coordinate (gj) at (9,6);
  \coordinate (gl) at (11,6);
  \coordinate (bb) at (1,1);
  \coordinate (bd) at (3,1);
  \coordinate (bf) at (5,1);
  \coordinate (bh) at (7,1);
  \coordinate (bj) at (9,1);
  \coordinate (bl) at (11,1);
  \coordinate (db) at (1,3);
  \coordinate (dd) at (3,3);
  \coordinate (df) at (5,3);
  \coordinate (dh) at (7,3);
  \coordinate (dj) at (9,3);
  \coordinate (dl) at (11,3);
  \coordinate (fb) at (1,5);
  \coordinate (fd) at (3,5);
  \coordinate (ff) at (5,5);
  \coordinate (fh) at (7,5);
  \coordinate (fj) at (9,5);
  \coordinate (fl) at (11,5);
  \draw[semithick] (ab)--(gb);
  \draw[semithick] (ad)--(gd);
  \draw[semithick] (af)--(gf);
  \draw[semithick] (ah)--(gh);
  \draw[semithick] (aj)--(gj);
  \draw[semithick] (al)--(gl);
  \draw[semithick] (ba)--(bm);
  \draw[semithick] (da)--(dm);
  \draw[semithick] (fa)--(fm);
  \foreach \P in {(ab), (ad), (af), (ah), (aj), (al), (ba), (bc), (be), (bg), (bi), (bk), (bm), (cb), (cd), (cf), (ch), (cj), (cl), (da), (dc), (de), (dg), (di), (dk), (dm), (eb), (ed), (ef), (eh), (ej), (el), (fa), (fc), (fe), (fg), (fi), (fk), (fm), (gb), (gd), (gf), (gh), (gj), (gl)}
  {%
  \draw[fill=white, very thin] \P circle (.25);
  }%
  \foreach \P in {(bb), (bd), (bf), (bh), (bj), (bl), (db), (dd), (df), (dh), (dj), (dl), (fb), (fd), (ff), (fh), (fj), (fl)}
  {%
  \path[fill=white] \P circle (.32);
  }%
  \node at (bb) {$R^\Gamma_{z_3}$};
  \node at (bd) {$R^\Gamma_{z_3}$};
  \node at (bf) {$R^\Gamma_{z_3}$};
  \node at (bh) {$R^\Gamma_{z_3}$};
  \node at (bj) {$R^\Gamma_{z_3}$};
  \node at (bl) {$R^\Gamma_{z_3}$};
  \node at (db) {$R^\Gamma_{z_2}$};
  \node at (dd) {$R^\Gamma_{z_2}$};
  \node at (df) {$R^\Gamma_{z_2}$};
  \node at (dh) {$R^\Gamma_{z_2}$};
  \node at (dj) {$R^\Gamma_{z_2}$};
  \node at (dl) {$R^\Gamma_{z_2}$};
  \node at (fb) {$R^\Gamma_{z_1}$};
  \node at (fd) {$R^\Gamma_{z_1}$};
  \node at (ff) {$R^\Gamma_{z_1}$};
  \node at (fh) {$R^\Gamma_{z_1}$};
  \node at (fj) {$R^\Gamma_{z_1}$};
  \node at (fl) {$R^\Gamma_{z_1}$};
  \node at (gb) {$-$};
  \node at (gd) {$+$};
  \node at (gf) {$-$};
  \node at (gh) {$+$};
  \node at (gj) {$+$};
  \node at (gl) {$-$};
  \node at (fa) {$+$};
  \node at (fc) {$+$};
  \node at (fe) {$+$};
  \node at (fg) {$-$};
  \node at (fi) {$+$};
  \node at (fk) {$+$};
  \node at (fm) {$-$};
  \node at (eb) {$-$};
  \node at (ed) {$+$};
  \node at (ef) {$+$};
  \node at (eh) {$-$};
  \node at (ej) {$+$};
  \node at (el) {$+$};
  \node at (da) {$+$};
  \node at (dc) {$-$};
  \node at (de) {$-$};
  \node at (dg) {$-$};
  \node at (di) {$-$};
  \node at (dk) {$-$};
  \node at (dm) {$-$};
  \node at (cb) {$+$};
  \node at (cd) {$+$};
  \node at (cf) {$+$};
  \node at (ch) {$-$};
  \node at (cj) {$+$};
  \node at (cl) {$+$};
  \node at (ba) {$+$};
  \node at (bc) {$+$};
  \node at (be) {$+$};
  \node at (bg) {$+$};
  \node at (bi) {$-$};
  \node at (bk) {$-$};
  \node at (bm) {$-$};
  \node at (ab) {$+$};
  \node at (ad) {$+$};
  \node at (af) {$+$};
  \node at (ah) {$+$};
  \node at (aj) {$+$};
  \node at (al) {$+$};
  \node at (11,7) {$0$};
  \node at (9,7) {$1$};
  \node at (7,7) {$2$};
  \node at (5,7) {$3$};
  \node at (3,7) {$4$};
  \node at (1,7) {$5$};
  \node at (baUP) {$0$};
  \node at (bcUP) {$1$};
  \node at (beUP) {$0$};
  \node at (bgUP) {$1$};
  \node at (biUP) {$0$};
  \node at (bkUP) {$0$};
  \node at (bmUP) {$0$};
  \node at (daUP) {$1$};
  \node at (dcUP) {$0$};
  \node at (deUP) {$0$};
  \node at (dgUP) {$0$};
  \node at (diUP) {$0$};
  \node at (dkUP) {$0$};
  \node at (dmUP) {$0$};
  \node at (faUP) {$1$};
  \node at (fcUP) {$0$};
  \node at (feUP) {$1$};
  \node at (fgUP) {$0$};
  \node at (fiUP) {$0$};
  \node at (fkUP) {$1$};
  \node at (fmUP) {$0$};

\end{tikzpicture}
\caption{An admissible state of a metaplectic ice of type $\Gamma$ for $n=2$. 
The top boundary condition is given by the partition $\lambda = (3,2,0)$.  
}
\label{fig:mice}
\end{figure}

Each vertex in the model is assigned a Boltzmann weight from one of two sets
of weights called $R^\Gamma_{z_i}$ and $R^\Delta_{z_i}$, each depending on a
parameter $z_i \in \C^\times$, with $i = 1, \ldots, r$, according to the row
in which the vertex appears. We are ultimately interested in systems whose
vertices are all either type $\Delta$ or $\Gamma$, but in order to prove that
two such systems have equal partitions functions we will consider
``intermediate'' systems with a mix of rows of $\Delta$ vertices and rows of
$\Gamma$ vertices.\footnote{In~\cite{Graythesis} the systems of ultimate interest
are also ones in which rows of $\Gamma$- and $\Delta$-ice occur intermixed.}
In Figure~\ref{fig:mice} all vertices use a weight $R_{z_i}^\Gamma$ for $i=1,2,3$, and 
labels have been placed on the vertices to indicate the weighting function to be used.

In order to define these Boltzmann weights, we require one additional
statistic called \textit{charge}. This charge is an element of $\Z /n \Z $,
and is assigned to each horizontal edge in a state of ice.\footnote{Charge may
also be regarded as taking values in the non-negative integers, and all of our results will
remain true. However the Boltzmann weights that we use will only depend on the charges
modulo $n$. The case of values in $\Z /n \Z$ is most relevant to
metaplectic Whittaker functions on $n$-fold covers.}
In figures, we denote charge by representatives in $[0,n-1]$. To an admissible state of ice, charge
of an edge may be viewed as a global statistic. For rows of $\Gamma$-ice, it begins at 0 on
the right edge, and increments by 1 (mod $n$) at each edge with $+$ spin moving left-to-right.
Therefore, for rows of $\Gamma$-ice,
  the charge on a horizontal edge equals the number of $+$ edges
  (modulo $n$) in its row to the right of the given edge (including the
  edge in question). In Figure~\ref{fig:mice}, we have written the charge mod 2 above each horizontal edge.
For rows of $\Delta$-ice, we begin the charge at $0$ on the left edge and increment charge by 1 (mod $n$)
on each edge with a $-$ spin moving right-to-left. Therefore, for rows of $\Delta$-ice, the charge on a horizontal edge equals the number of $-$ edges (modulo $n$) in its row to the left of the given edge (including the edge in question). The possible values of charge for a given spin configuration are listed in Figure~\ref{mweights}.

\begin{remark}\label{remark:charge0}
As is clear from Figure~\ref{mweights}, the $\Gamma$-ice weights are 0 unless the horizontal edges with spin $-$ have associated charge $0$. Similarly for $\Delta$-ice weights, vertices with horizontal spin $+$ have weight 0 unless the associated charge is $0$. Thus each type of ice has $n$ admissible charges attached to one spin, and $1$ admissible charge associated to the other spin. This is the first hint of the connection to the defining module of a Drinfeld twist of $U_{q^{-1/2}} (\widehat{\mathfrak{gl}} (n| 1))$ of \cite{BBB} mentioned in the introduction. \end{remark}

\begin{remark} Charge may also be thought of as a local statistic. Indeed we could initially allow all possible choices of charge mod $n$ on horizontal edges, resulting in a $6n^2$-vertex model. Then taking weights as in Figure~\ref{mweights}, many of these vertices would receive Boltzmann weight 0, reducing to a $(2n+4)$-vertex model. If charge is regarded as a local statistic, then we also want to regard charges on boundary horizontal edges as part of the defining data of the system. Summing over all possible boundary charges gives the partition function we're studying at present.

\end{remark}

To each vertex we assign a $\textit{Boltzmann weight}$ according to Figure~\ref{mweights}, which depends on
the spin and charge of the edges incident to the vertex and on the type and
parameter of the vertex. The Boltzmann weights in Figure~\ref{mweights} 
depend on a parameter $v$ and a function $g(a)$ for a
charge $a$ mod $n$ that satisfies $g(0) = -v$ and $g(a) g(n-a) = v$ if $n$
does not divide $a$. In the connection with metaplectic Whittaker functions,
$v = q^{-1}$ where $q$ is the cardinality of the residue field of the
nonarchimedean local field $F$, and $g$ is a $p$-adic Gauss sum given by
integrating an additive and multiplicative character over the units in the
ring of integers of $F$. For a detailed definition of $g$ see for example
Section 3 in \cite{mice}, but note that for the existence of the Yang-Baxter
equation (and therefore for proving our theorems) we only use the two
properties of $g$ stated above. Any choice of adjacent spins and charges
that is not listed in Figure~\ref{mweights} has Boltzmann weight $0$.

The Boltzmann weight of a state is the product of the Boltzmann weights
of all vertices in the state.  The $\textit{partition function}$ of any such
model is the sum of the Boltzmann weights of each admissible state in the
model. Our goal is to
study identities among partition functions of the type described above.

\begin{figure}[h]
\[
\begin{array}{|c|c|c|c|c|c|c|}
\hline
&a_1&a_2&b_1&b_2&c_1&c_2\\
\hline
\Gamma \text{-ice} &
\begin{array}{c}\hspace{-6pt}\gammaice{+}{+}{+}{+}{a+1}{a}{R^\Gamma_z}\\1\end{array} &
\begin{array}{c}\gammaice{-}{-}{-}{-}{0}{0}{R^\Gamma_z}\\z\end{array} &
\begin{array}{c}\hspace{-6pt}\gammaice{+}{-}{+}{-}{a+1}{a}{R^\Gamma_z}\\g(a)\end{array} &
\begin{array}{c}\gammaice{-}{+}{-}{+}{0}{0}{R^\Gamma_z}\\z\end{array} &
\begin{array}{c}\gammaice{-}{+}{+}{-}{0}{0}{R^\Gamma_z}\\(1-v) z\end{array} &
\begin{array}{c}\gammaice{+}{-}{-}{+}{1}{0}{R^\Gamma_z}\\1\end{array} \\
\hline
\Delta \text{-ice} &
\begin{array}{c}\gammaice{+}{+}{+}{+}{0}{0}{R^\Delta_z}\\1\end{array} &
\begin{array}{c}\gammaice{-}{-}{-}{-}{a}{a+1}{R^\Delta_z}\hspace{-6pt}\\g(a) z\end{array} &
\begin{array}{c}\gammaice{+}{-}{+}{-}{0}{0}{R^\Delta_z}\\1\end{array} &
\begin{array}{c}\gammaice{-}{+}{-}{+}{a}{a+1}{R^\Delta_z}\hspace{-5pt}\\z\end{array} &
\begin{array}{c}\gammaice{-}{+}{+}{-}{0}{0}{R^\Delta_z}\\ (1-v) z\end{array} &
\begin{array}{c}\gammaice{+}{-}{-}{+}{0}{1}{R^\Delta_z}\\1\end{array} \\
\hline
\end{array}
\]
\caption{The Boltzmann weights for $\Gamma$ and $\Delta$ vertices associated to
a row parameter $z \in \mathbb{C}^\times$. The charge $a$ above an edge indicates any choice of
charge mod $n$ and gives the indicated weight. The weights depend on a parameter $v$ and any function $g$ with $g(0)=-v$
and $g(n-a)g(a) = v$ if $a \not\equiv 0$ mod $n$. If a configuration does not
appear in this table, its weight is zero.}
\label{mweights}
\end{figure}

For convenience, we refer to a rectangular-shaped model with boundary conditions and Boltzmann weights chosen as above as a \textit{system}. In particular, denote by $S_{\z, \lambda}^\Gamma$ (respectively $S^{\Delta}_{\z,\lambda}$) the system of metaplectic ice with all vertices of type $\Gamma$ (resp., $\Delta$), top row boundary conditions determined by $\lambda$ and for $\z = (z_1,\ldots, z_r)$, each vertex in row $i$ has parameter $z_i$.

Given a system $S$, let $Z(S)$ denote its partition function. If $\z = (z_1,\ldots, z_r)$, define $\z^\sigma := (z_r, z_{r-1}, \ldots, z_1)$. We can now state precisely the main theorem of this paper:

\begin{theorem}\label{theorem:Aprime}
The partition functions $Z(S^\Gamma_{\z,\lambda})$ and $Z(S^\Delta_{\z^\sigma\!,\lambda})$ are equal. 
\end{theorem}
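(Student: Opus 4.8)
The plan is to follow the route advertised in the introduction: first establish a commutativity statement for row transfer matrices (the content of Theorem~\ref{theorem:Bprime}, equivalent to Statement~B), and then deduce the equality of full partition functions from it. I would begin by recasting both sides in transfer-matrix language. For a fixed ice type $X \in \{\Gamma,\Delta\}$ and a spectral parameter $z$, let $T^X_z$ denote the row-transfer operator whose matrix entries, indexed by the spin configurations on the vertical edges entering a row from below and leaving it above, are obtained by summing the products of Boltzmann weights from Figure~\ref{mweights} over all admissible spins and charges of the horizontal edges in that row, subject to the left ($+$) and right ($-$) boundary spins. With the bottom boundary vector $| + \cdots + \rangle$ and the top covector $\langle \lambda |$ prescribed by the Lattice and Boundary Spins Assignment, one has
\[
Z(S^\Gamma_{\z,\lambda}) = \langle \lambda |\, T^\Gamma_{z_r} \cdots T^\Gamma_{z_1}\, | + \cdots + \rangle,
\qquad
Z(S^\Delta_{\z^\sigma\!,\lambda}) = \langle \lambda |\, T^\Delta_{z_1} \cdots T^\Delta_{z_r}\, | + \cdots + \rangle,
\]
where the reversal $\z \mapsto \z^\sigma$ is visible as the opposite order of the factors. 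The theorem is thus equivalent to an identity between these two ordered products on the prescribed boundary vectors.

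Next I would bring in the Yang--Baxter equations. The models of \cite{BBB} are solvable, and together with the mixed versions worked out in \cite{Graythesis} one has $R$-matrices $R^{\Gamma\Delta}$ and $R^{\Delta\Gamma}$ acting on a pair of horizontal (auxiliary) lines and intertwining a $\Gamma$-vertex with a $\Delta$-vertex. The core step is the standard \emph{train} (or caduceus) argument: attach such an $R$-matrix to the two auxiliary lines at one end of the lattice and repeatedly apply the Yang--Baxter equation to drag it, vertex by vertex, across all $M$ columns to the opposite end. Since the vertical (quantum) space is untouched by these moves, the result is an $RTT$-type operator relation exchanging a $\Gamma$-row and a $\Delta$-row while swapping their spectral parameters; iterating it converts a stack of $\Gamma$-rows into a stack of $\Delta$-rows in reversed order, which is precisely the commutativity recorded in Theorem~\ref{theorem:Bprime}.

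To pass from the operator identity to the equality of the scalar partition functions, I would analyze the effect of the dragged $R$-matrix on the fixed boundary data. The left edges are all $+$, the right edges all $-$, and the charges reset at the two ends; I would check directly from the entries in Figure~\ref{mweights} that on these extreme boundary states the $R$-matrix acts by a single scalar (indeed trivially after normalization), so that it can be created at one boundary and annihilated at the other without altering the weight. Feeding the boundary vectors $\langle \lambda |$ and $| + \cdots + \rangle$ through the intertwining relation then yields $Z(S^\Gamma_{\z,\lambda}) = Z(S^\Delta_{\z^\sigma\!,\lambda})$, which is Theorem~\ref{theorem:Aprime}.

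I expect the genuine difficulty to lie not in the algebra of the Yang--Baxter equations themselves but in the charge bookkeeping at the boundaries. Charge is a global statistic that for $\Gamma$-ice accumulates on the $+$ edges from the right and for $\Delta$-ice accumulates on the $-$ edges from the left, so the conversion $\Gamma \leftrightarrow \Delta$ inverts the direction in which charge is read off. This inversion is exactly what forces the reversal $\z \mapsto \z^\sigma$, and making it precise---verifying that the accumulated charges match up across the conversion and that the boundary contribution of the dragged $R$-matrix collapses to a clean scalar rather than a nontrivial mixing of charge sectors---is the delicate part of the argument, and the place where the relations $g(0) = -v$ and $g(a)g(n-a) = v$ recorded in Figure~\ref{mweights} will have to be used.
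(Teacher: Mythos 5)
Your reduction to Theorem~\ref{theorem:Bprime} via the train argument is the right engine, but the argument as written has a genuine gap: nothing in it ever changes the \emph{type} of a row. The $RTT$-relation you obtain by dragging $R^{\Gamma\Delta}$ (or $R^{\Delta\Gamma}$) across the lattice exchanges the positions of one $\Gamma$-row and one $\Delta$-row while swapping their spectral parameters --- that is exactly Theorem~\ref{theorem:Bprime} --- but it preserves the multiset of row types. Starting from $S^\Gamma_{\z,\lambda}$, every row is $\Gamma$, so there is no $\Delta$-row to commute past anything, and no amount of iterating the exchange relation will ``convert a stack of $\Gamma$-rows into a stack of $\Delta$-rows.'' The sentence in which you claim this conversion is a non sequitur; the charge bookkeeping and the boundary scalar you flag as the delicate points are not where the difficulty lies.

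The missing ingredient is a seed conversion that is not a Yang--Baxter argument at all. Because the bottom boundary consists entirely of $+$ spins, the bottom row of any admissible state is rigid: exactly one vertical edge above it carries $-$, the horizontal edges are $+$ to its left and $-$ to its right, and the only vertex configurations that can occur are $a_1$, $b_2$, and $c_2$. For precisely these three configurations the $\Gamma$- and $\Delta$-weights in Figure~\ref{mweights} coincide (after choosing the unique admissible charges), so the bottom row --- and only the bottom row --- may be re-typed from $\Gamma$ to $\Delta$ without changing the partition function. With that observation in hand your scheme goes through: re-type the bottom row, use Theorem~\ref{theorem:Bprime} to float the new $\Delta$-row to the top, and repeat $r$ times to arrive at $S^\Delta_{\z^\sigma\!,\lambda}$. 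You should add this step explicitly; without it the deduction of Theorem~\ref{theorem:Aprime} from Theorem~\ref{theorem:Bprime} does not close.
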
 
 
This is the solution to Problem 2 discussed in the introduction.
We will give a short proof of it in the next section using Theorem~\ref{theorem:Bprime}.

In order to reduce the theorem to a simpler statement, we define a two systems of two-row ice. Given a pair of partitions $\lambda = (\lambda_1, \ldots, \lambda_r)$ and $\mu = (\mu_1, \ldots, \mu_{r-1})$, and a pair $\z = (z_1, z_2) \in (\C^\times)^2$, we denote by $S_{\z, \lambda, \mu}^{\Gamma \Delta}$ the system of metaplectic ice with the following properties:
\begin{itemize}
\item the grid consists of $2$ rows and $M = \lambda_1+r$ columns;
\item the left and right boundary edges have spins $+$ and $-$, respectively; the top and bottom boundary edges have spins $-$ at the columns corresponding to parts of $\lambda$ and $\mu$, respectively;
\item the top row vertices have weights $R^\Gamma_{z_1}$ and the bottom row vertices have weights $R^\Delta_{z_2}$.
\end{itemize}
See Figure \ref{dgice} for an example of such a system $S^{\Gamma \Delta}_{\z, \lambda, \mu}$. We denote by $S^{\Delta \Gamma}_{\z^\sigma\!, \lambda, \mu}$ the system of metaplectic ice with the above conditions, but now with top row $\Delta$ and parameter $z_2$ and bottom row $\Gamma$ with parameter $z_1$. The boundary condition we impose is still charge $0$ for left boundary edges in a $\Delta$ row and right boundary edges in a $\Gamma$ row.

\begin{figure}[h]
\begin{tikzpicture}[scale=0.7, every node/.style={scale=0.8}]  
  \coordinate (ba) at (0,1);
  \coordinate (bc) at (2,1);
  \coordinate (be) at (4,1);
  \coordinate (bg) at (6,1);
  \coordinate (bi) at (8,1);
  \coordinate (bk) at (10,1);
  \coordinate (bm) at (12,1);
  \coordinate (baUP) at (0,1.5);
  \coordinate (bcUP) at (2,1.5);
  \coordinate (beUP) at (4,1.5);
  \coordinate (bgUP) at (6,1.5);
  \coordinate (biUP) at (8,1.5);
  \coordinate (bkUP) at (10,1.5);
  \coordinate (bmUP) at (12,1.5);

  \coordinate (cb) at (1,2);
  \coordinate (cd) at (3,2);
  \coordinate (cf) at (5,2);
  \coordinate (ch) at (7,2);
  \coordinate (cj) at (9,2);
  \coordinate (cl) at (11,2);
  \coordinate (da) at (0,3);
  \coordinate (dc) at (2,3);
  \coordinate (de) at (4,3);
  \coordinate (dg) at (6,3);
  \coordinate (di) at (8,3);
  \coordinate (dk) at (10,3);
  \coordinate (dm) at (12,3);
  \coordinate (daUP) at (0,3.5);
  \coordinate (dcUP) at (2,3.5);
  \coordinate (deUP) at (4,3.5);
  \coordinate (dgUP) at (6,3.5);
  \coordinate (diUP) at (8,3.5);
  \coordinate (dkUP) at (10,3.5);
  \coordinate (dmUP) at (12,3.5);

  \coordinate (eb) at (1,4);
  \coordinate (ed) at (3,4);
  \coordinate (ef) at (5,4);
  \coordinate (eh) at (7,4);
  \coordinate (ej) at (9,4);
  \coordinate (el) at (11,4);
  \coordinate (fa) at (0,5);
  \coordinate (fc) at (2,5);
  \coordinate (fe) at (4,5);
  \coordinate (fg) at (6,5);
  \coordinate (fi) at (8,5);
  \coordinate (fk) at (10,5);
  \coordinate (fm) at (12,5);
  \coordinate (faUP) at (0,5.5);
  \coordinate (fcUP) at (2,5.5);
  \coordinate (feUP) at (4,5.5);
  \coordinate (fgUP) at (6,5.5);
  \coordinate (fiUP) at (8,5.5);
  \coordinate (fkUP) at (10,5.5);
  \coordinate (fmUP) at (12,5.5);

  \coordinate (gb) at (1,6);
  \coordinate (gd) at (3,6);
  \coordinate (gf) at (5,6);
  \coordinate (gh) at (7,6);
  \coordinate (gj) at (9,6);
  \coordinate (gl) at (11,6);
  \coordinate (db) at (1,3);
  \coordinate (dd) at (3,3);
  \coordinate (df) at (5,3);
  \coordinate (dh) at (7,3);
  \coordinate (dj) at (9,3);
  \coordinate (dl) at (11,3);
  \coordinate (fb) at (1,5);
  \coordinate (fd) at (3,5);
  \coordinate (ff) at (5,5);
  \coordinate (fh) at (7,5);
  \coordinate (fj) at (9,5);
  \coordinate (fl) at (11,5);
  \draw[semithick] (cb)--(gb);
  \draw[semithick] (cd)--(gd);
  \draw[semithick] (cf)--(gf);
  \draw[semithick] (ch)--(gh);
  \draw[semithick] (cj)--(gj);
  \draw[semithick] (cl)--(gl);
  \draw[semithick] (da)--(dm);
  \draw[semithick] (fa)--(fm);
  \draw[fill=white, very thin] (cb) circle (.25);
  \draw[fill=white, very thin] (cd) circle (.25);
  \draw[fill=white, very thin] (cf) circle (.25);
  \draw[fill=white, very thin] (ch) circle (.25);
  \draw[fill=white, very thin] (cj) circle (.25);
  \draw[fill=white, very thin] (cl) circle (.25);
  \draw[fill=white, very thin] (da) circle (.25);
  \draw[fill=white, very thin] (dc) circle (.25);
  \draw[fill=white, very thin] (de) circle (.25);
  \draw[fill=white, very thin] (dg) circle (.25);
  \draw[fill=white, very thin] (di) circle (.25);
  \draw[fill=white, very thin] (dk) circle (.25);
  \draw[fill=white, very thin] (dm) circle (.25);
  \draw[fill=white, very thin] (eb) circle (.25);
  \draw[fill=white, very thin] (ed) circle (.25);
  \draw[fill=white, very thin] (ef) circle (.25);
  \draw[fill=white, very thin] (eh) circle (.25);
  \draw[fill=white, very thin] (ej) circle (.25);
  \draw[fill=white, very thin] (el) circle (.25);
  \draw[fill=white, very thin] (fa) circle (.25);
  \draw[fill=white, very thin] (fc) circle (.25);
  \draw[fill=white, very thin] (fe) circle (.25);
  \draw[fill=white, very thin] (fg) circle (.25);
  \draw[fill=white, very thin] (fi) circle (.25);
  \draw[fill=white, very thin] (fk) circle (.25);
  \draw[fill=white, very thin] (fm) circle (.25);
  \draw[fill=white, very thin] (gb) circle (.25);
  \draw[fill=white, very thin] (gd) circle (.25);
  \draw[fill=white, very thin] (gf) circle (.25);
  \draw[fill=white, very thin] (gh) circle (.25);
  \draw[fill=white, very thin] (gj) circle (.25);
  \draw[fill=white, very thin] (gl) circle (.25);
  %
  \path[fill=white] (db) circle (.32);
  \path[fill=white] (dd) circle (.32);
  \path[fill=white] (df) circle (.32);
  \path[fill=white] (dh) circle (.32);
  \path[fill=white] (dj) circle (.32);
  \path[fill=white] (dl) circle (.32);
  \path[fill=white] (fb) circle (.32);
  \path[fill=white] (fd) circle (.32);
  \path[fill=white] (ff) circle (.32);
  \path[fill=white] (fh) circle (.32);
  \path[fill=white] (fj) circle (.32);
  \path[fill=white] (fl) circle (.32);
  \node at (db) {$R^\Delta_{z_2}$};
  \node at (dd) {$R^\Delta_{z_2}$};
  \node at (df) {$R^\Delta_{z_2}$};
  \node at (dh) {$R^\Delta_{z_2}$};
  \node at (dj) {$R^\Delta_{z_2}$};
  \node at (dl) {$R^\Delta_{z_2}$};
  \node at (fb) {$R^\Gamma_{z_1}$};
  \node at (fd) {$R^\Gamma_{z_1}$};
  \node at (ff) {$R^\Gamma_{z_1}$};
  \node at (fh) {$R^\Gamma_{z_1}$};
  \node at (fj) {$R^\Gamma_{z_1}$};
  \node at (fl) {$R^\Gamma_{z_1}$};
  \node at (gb) {$+$};
  \node at (gd) {$-$};
  \node at (gf) {$+$};
  \node at (gh) {$-$};
  \node at (gj) {$-$};
  \node at (gl) {$+$};
  \node at (fa) {$+$};
  \node at (fc) {};
  \node at (fe) {};
  \node at (fg) {};
  \node at (fi) {};
  \node at (fk) {};
  \node at (fm) {$-$};
  \node at (eb) {};
  \node at (ed) {};
  \node at (ef) {};
  \node at (eh) {};
  \node at (ej) {};
  \node at (el) {};
  \node at (da) {$+$};
  \node at (dc) {};
  \node at (de) {};
  \node at (dg) {};
  \node at (di) {};
  \node at (dk) {};
  \node at (dm) {$-$};
  \node at (cb) {$+$};
  \node at (cd) {$-$};
  \node at (cf) {$+$};
  \node at (ch) {$+$};
  \node at (cj) {$+$};
  \node at (cl) {$+$};
  \node at (11,7) {$0$};
  \node at (9,7) {$1$};
  \node at (7,7) {$2$};
  \node at (5,7) {$3$};
  \node at (3,7) {$4$};
  \node at (1,7) {$5$};
\end{tikzpicture}
\caption{$\smash{S^{\Gamma \Delta}_{\z, \lambda, \mu}}$-ice for $\lambda=(2,1,1)$ and $\mu = (4)$.  
}
\label{dgice}
\end{figure}

\begin{theorem}\label{theorem:Bprime}
Given partitions $\lambda$ and $\mu$ and any parameters $\z = (z_1, \ldots, z_r)$, $Z(S_{\z,\lambda,\mu}^{\Gamma \Delta}) = Z(S_{\z^\sigma\!,\lambda, \mu}^{\Delta \Gamma}).$ 
\end{theorem}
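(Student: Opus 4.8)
The plan is to prove Theorem~\ref{theorem:Bprime} by the classical \emph{train argument} for commuting transfer matrices (\cite{Baxter}, Chapter~10): I introduce an auxiliary ``crossing'' vertex, namely a mixed $R$-matrix that braids a $\Gamma$-row against a $\Delta$-row, and drag it across the two-row grid, applying a Yang-Baxter equation one column at a time. Because such an $R$-vertex geometrically interchanges the two horizontal lines, transporting it from one end of the grid to the other converts a strip with a $\Gamma$-row of parameter $z_1$ on top of a $\Delta$-row of parameter $z_2$ into a strip with a $\Delta$-row of parameter $z_2$ on top of a $\Gamma$-row of parameter $z_1$; this is exactly the passage from $S^{\Gamma\Delta}_{\z,\lambda,\mu}$ to $S^{\Delta\Gamma}_{\z^\sigma\!,\lambda,\mu}$, with the parameters $z_1,z_2$ being carried along with their row types.

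Concretely, I would invoke from \cite{BBB} and \cite{Graythesis} the mixed $R$-matrix $R^{\Gamma\Delta}$ acting on the tensor product of the two horizontal (auxiliary) spaces, together with the (braided) Yang-Baxter equation asserting that, after summing over all internal spins and charges,
\[
R^{\Gamma\Delta}\,\bigl(R^\Gamma_{z_1}\otimes R^\Delta_{z_2}\bigr)\;=\;\bigl(R^\Delta_{z_2}\otimes R^\Gamma_{z_1}\bigr)\,R^{\Gamma\Delta},
\]
where each $R^\bullet_{z_i}$ is read as a column operator acting on one auxiliary space and on the shared vertical (quantum) edges of a single column. Pictorially this is the statement that one $R$-vertex may be commuted past a single two-row column at the cost of crossing — and hence swapping — the two row types and their parameters. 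Iterating this identity column by column transports the $R$-vertex across the whole grid while converting the interior from $\Gamma\Delta$-type to $\Delta\Gamma$-type.

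It then remains to evaluate the $R$-vertex against the frozen boundaries at the two ends. Placing the $R$-vertex just inside the right boundary, where both auxiliary edges carry spin $-$, and just inside the left boundary, where both carry spin $+$, I expect that — in combination with the charge-$0$ conventions (right edges of a $\Gamma$-row, left edges of a $\Delta$-row) — each placement admits a unique admissible configuration of the $R$-vertex, contributing a single scalar $c_R$ on the right and $c_L$ on the left. Inspecting the weights in Figure~\ref{mweights} one checks $c_R=c_L$, so that the decorated partition function equals $c_R\,Z(S^{\Gamma\Delta}_{\z,\lambda,\mu})$ when the $R$-vertex sits at the right and $c_L\,Z(S^{\Delta\Gamma}_{\z^\sigma\!,\lambda,\mu})$ when it sits at the left. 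The train argument equates these two quantities, and cancelling the common scalar yields the theorem.

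The main obstacle I anticipate is not the formal train argument, which is routine once the Yang-Baxter equation is available, but the bookkeeping of the charge statistic throughout the process. Since charge is governed by a \emph{global} rule that differs for $\Gamma$- and $\Delta$-rows (counting $+$ spins to the right versus $-$ spins to the left), I must verify that $R^{\Gamma\Delta}$ correctly intertwines the two charge conventions, that the frozen boundary vertices are genuinely compatible with the charge-$0$ boundary conditions at both ends, and that crossing the two rows preserves admissibility. Establishing the existence and the precise charge-dependent form of the mixed $R$-matrix $R^{\Gamma\Delta}$ — the content borrowed from \cite{BBB} and \cite{Graythesis} — is where the genuine difficulty of the metaplectic ($n>1$) case resides; with that equation in hand, the remaining steps are careful but elementary.
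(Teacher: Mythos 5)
Your proposal is correct and is essentially the paper's own proof: one attaches a mixed tilted $R$-vertex at the right boundary of $S^{\Gamma\Delta}_{\z,\lambda,\mu}$ (where it freezes to the scalar $z_2^n - v^n z_1^n$, by Table~\ref{table:XYice}, the same scalar it produces once transported to the all-$+$ left boundary), drags it across by the Yang-Baxter equation of Theorem~\ref{theorem:ybe} via the train argument, and cancels the common scalar. The only point you omit is that this scalar may vanish, in which case the paper finishes with an easy continuity argument in the parameters $z_1,z_2$.
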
 
 
We delay the proof until Section \ref{section:YBE} where we introduce our main tool, the Yang-Baxter equation. This theorem is equivalent to Statement B in \cite{wmd5book}; for details see Theorem 5 of \cite{mice}. We now prove Theorem \ref{theorem:Aprime} assuming Theorem \ref{theorem:Bprime}. 

\begin{proof}[Proof of Theorem \ref{theorem:Aprime}]
Let us start with $\smash{S^{\Gamma}_{\z,\lambda}}$-ice and show that its partition function is the same as that of $\smash{S^{\Delta}_{\z,\lambda}}$-ice in two steps. 

Let us consider, in addition to the system
$\smash{S^{\Gamma}_{\z,\lambda}}$, a system with identical boundary
conditions, except that all rows but the last are $\Gamma$-ice, but
the bottom row has been changed to $\Delta$-ice. We will call this
system $S'$. We will argue that there is a bijection between
the admissible states of $\smash{S^{\Gamma}_{\z,\lambda}}$ and those
of $S'$ in which corresponding states have the same Boltzmann
weight, so the two systems have the same partition function.

To see this recall that along the bottom boundary of $\Gamma$-ice, all the 
(vertical) edges have been assigned spin $+$. It is easy to see that in the
row above this, there must be exactly one vertical edge with spin $-$,
and all the other vertical edges will have spin $+$. Regarding the
horizontal edges in this bottom row, those to the left of the
unique $-$ vertical edge must have spin $+$, and those to the
right must have spin $-$. Thus the bottom row looks like this:
\[\begin{tikzpicture}[scale=0.7, every node/.style={scale=0.8}]
  \coordinate (ab) at (1,0);
  \coordinate (ad) at (3,0);
  \coordinate (af) at (5,0);
  \coordinate (ah) at (7,0);
  \coordinate (aj) at (9,0);
  \coordinate (al) at (11,0);
  \coordinate (cb) at (1,2);
  \coordinate (cd) at (3,2);
  \coordinate (cf) at (5,2);
  \coordinate (ch) at (7,2);
  \coordinate (cj) at (9,2);
  \coordinate (cl) at (11,2);
  \coordinate (ba) at (0,1);
  \coordinate (bb) at (1,1);
  \coordinate (bc) at (2,1);
  \coordinate (bd) at (3,1);
  \coordinate (be) at (4,1);
    \coordinate (bf) at (5,1);
  \coordinate (bg) at (6,1);
    \coordinate (bh) at (7,1);
  \coordinate (bi) at (8,1);
    \coordinate (bj) at (9,1);
  \coordinate (bk) at (10,1);
    \coordinate (bl) at (11,1);
  \coordinate (bm) at (12,1);
  \coordinate (baUP) at (0,1.5);
  \coordinate (bcUP) at (2,1.5);
  \coordinate (beUP) at (4,1.5);
  \coordinate (bgUP) at (6,1.5);
  \coordinate (biUP) at (8,1.5);
  \coordinate (bkUP) at (10,1.5);
  \coordinate (bmUP) at (12,1.5);
  \draw[semithick] (ab)--(cb);
  \draw[semithick] (ad)--(cd);
  \draw[semithick] (af)--(cf);
  \draw[semithick] (ah)--(ch);
  \draw[semithick] (aj)--(cj);
  \draw[semithick] (al)--(cl);
  \draw[semithick] (ba)--(bm);
  \foreach \P in {(ab), (ad), (af), (ah), (aj), (al), (ba), (bc), (be), (bg),
    (bi), (bk), (bm), (cb), (cd), (cf), (ch), (cj), (cl)}
  {%
  \draw[fill=white, very thin] \P circle (.25);
  }%
  \node at (ab) {$+$};
  \node at (ad) {$+$};
  \node at (af) {$+$};
  \node at (ah) {$+$};
  \node at (aj) {$+$};
  \node at (al) {$+$};
  \node at (cb) {$+$};
  \node at (cd) {$+$};
  \node at (cf) {$+$};
  \node at (ch) {$-$};
  \node at (cj) {$+$};
  \node at (cl) {$+$};
  \node at (ba) {$+$};
  \node at (bc) {$+$};
  \node at (be) {$+$};
  \node at (bg) {$+$};
  \node at (bi) {$-$};
  \node at (bk) {$-$};
  \node at (bm) {$-$};

  \path[fill=white] (bb) circle (.32);
  \node at (bb) {$R^X_{z}$};
  \path[fill=white] (bd) circle (.32);
  \node at (bd) {$R^X_{z}$};
  \path[fill=white] (bf) circle (.32);
  \node at (bf) {$R^X_{z}$};
  \path[fill=white] (bh) circle (.32);
  \node at (bh) {$R^X_{z}$};
  \path[fill=white] (bj) circle (.32);
  \node at (bj) {$R^X_{z}$};
  \path[fill=white] (bl) circle (.32);
  \node at (bl) {$R^X_{z}$};

\end{tikzpicture}
\]
Now, there is a unique way of assigning charges to the horizontal
edges here to an admissible state of either $\Gamma$-ice or $\Delta$-ice. Moreover, examining the Boltzmann weights of $\Gamma$-ice and $\Delta$-ice in Figure
\ref{mweights}, we notice that only the vertices of type $a_1$, $b_2$, and
$c_2$ have bottom edge with spin $+$, and for all these configurations the
weight of the vertex of type $\Gamma$ is the same as the weight of the
vertex of type $\Delta$. Moreover, choosing
the charges in the unique way that makes the state admissible,
the Boltzmann weight is $z_r^N$, where $N$ is the number of $b_2$
vertices in the row. The contribution to the partition functions
for $S'$ and $\smash{S^{\Gamma}_{\z,\lambda}}$ are the same,
and so these partition functions are equal.

We see that we may change the bottom row from $\Gamma$-ice to
$\Delta$-ice without changing the partition function. This would
not work for any other row.

The second step uses Theorem~\ref{theorem:Bprime} repeatedly. Since we have changed
the bottom row from $\Gamma$-ice to $\Delta$-ice, Theorem~\ref{theorem:Bprime} allows us
to interchange the bottom two rows and move the $\Delta$ row one step up. By doing this process repeatedly we move the $\Delta$ row with parameter $z_r$ to the top of the system without changing the partition function. 

Now the layer of $\Gamma$-ice with the $z_{r-1}$ parameter is at the bottom, and
as before we may change it to $\Delta$-ice without affecting the partition
function. Then we again move this row of $\Delta$-ice up using
Theorem~\ref{theorem:Bprime} until it reaches the second row. We continue the
process until all rows become $\Delta$-ice. The final model will be the system
$S^\Delta_{\z^\sigma\!,\lambda}$. 
\end{proof}

\section{The Yang-Baxter equation}\label{section:YBE}

In this section we prove Theorem \ref{theorem:Bprime} by a classical argument involving
the Yang-Baxter equation which was initially used by Baxter \cite{Baxter} to
solve the (classical, field-free) six- and eight-vertex models. In order to
explain our version of the Yang-Baxter equation, we first introduce another
family of vertices, which we'll refer to as $\textit{tilted}$ vertices,
whose adjacent edges are all horizontal edges of our prior models. Thus {\it each}
adjacent edge is assigned a charge and a spin. They are depicted in the partition functions in (\ref{mybe}).

Each tilted vertex is now assigned a set of Boltzmann weights $\smash{R^{X Y}_{z_1,z_2}}$ depending on a pair of types $X$,~$Y \in \{ \Gamma, \Delta \}$ and a corresponding pair of parameters $z_1$,~$z_2 \in \C^\times$. The $X$ type is associated to the northeast and southwest adjacent edges in the tilted vertex (with associated parameter $z_1$), while the $Y$ type is associated to the northwest and southeast adjacent edges with parameter $z_2$. Thus the order in which these types and parameters are listed matters. We assign Boltzmann weights to each of the four types of tilted vertex according to Table \ref{table:XYice}. As noted previously, any labeling of adjacent edges that is not listed has Boltzmann weight $0$.  

\begin{table}[h]
\begin{threeparttable}
\caption{\parindent=0pt\parskip=0pt The Boltzmann weights for $\Gamma \Delta$, $\Delta \Delta$, $\Delta \Gamma$, and $\Gamma \Gamma$ vertices. For $\Delta \Delta$ and $\Gamma \Gamma$, $a \neq b$ for each tilted vertex whose charges involve only $a$ and $b$. For the last two vertices in the top row of $\Gamma \Delta$, $a \neq b$. The function $g$ satisfies $g(0) = -v$ and $g(n-a) g(a) = v$ if $a \not\equiv 0$ mod $n$.}
\noindent\begin{tabularx}{1.0\textwidth}{@{}l*{5}{>{\centering\arraybackslash}X}@{}}
\toprule
	$\Gamma\Delta$
	&\XYtable{+}{+}{+}{+}{a\vphantom{0}}{0}{a\vphantom{0}}{0}{R^{\Gamma \Delta}_{z_1,z_2}}%
	&\XYtable{-}{-}{-}{-}{0}{a\vphantom{0}}{0}{a\vphantom{0}}{R^{\Gamma \Delta}_{z_1,z_2}}%
	&\XYtable{+}{-}{+}{-}{a\vphantom{0}}{a\vphantom{0}}{a\vphantom{0}}{a\vphantom{0}}{R^{\Gamma \Delta}_{z_1,z_2}}%
	&\XYtable{+}{-}{+}{-}{a\vphantom{b}}{b}{a\vphantom{b}}{b}{R^{\Gamma \Delta}_{z_1,z_2}}%
	&\XYtable{+}{-}{+}{-}{a\vphantom{b}}{b}{a\vphantom{b}}{b}{R^{\Gamma \Delta}_{z_1,z_2}}%
	\\[1ex]
	{}
	&{\footnotesize$z_{1}^n {-} v z_{2}^n$}
	&{\footnotesize$z_{1}^n {-} v z_{2}^n$}
	&{\scriptsize$(\ast)$}
	&{\footnotesize$v^2 z_{2}^n {-} z_{1}^n$}~{\scriptsize$(\dagger)$}
	&{\scriptsize$(\ddagger)$}
	\\
	\cmidrule(l{0.25cm}){2-6}
	{}
	&\XYtable{+}{-}{+}{-}{a\vphantom{b}}{b}{c\vphantom{b}}{d}{R^{\Gamma \Delta}_{z_1,z_2}}%
	&\XYtable{-}{+}{-}{+}{0}{0}{0}{0}{R^{\Gamma \Delta}_{z_1,z_2}}%
	&\XYtable{-}{+}{+}{-}{0}{0}{a\vphantom{b}}{b}{R^{\Gamma \Delta}_{z_1,z_2}}%
	&\XYtable{+}{-}{-}{+}{b}{a\vphantom{b}}{0}{0}{R^{\Gamma \Delta}_{z_1,z_2}}%
	&{}
	\\[2ex]
	{}
	&{\scriptsize$(\S)$}
	&{\footnotesize$z_{1}^n {-} z_{2}^n$}
	&{\footnotesize$(1{-}v) z_{1}^{a} z_{2}^{b-1}$}~{\scriptsize$(\|)$}
	&{\footnotesize$(1{-}v) z_{1}^{a-1} z_{2}^{b}$}~{\scriptsize$(\|)$}
	&{}
	\\
	\midrule	
	$\Delta\Delta$
	&\XYtable{+}{+}{+}{+}{0}{0}{0}{0}{R^{\Delta \Delta}_{z_1,z_2}}%
	&\XYtable{-}{-}{-}{-}{a\vphantom{b}}{b}{b}{a\vphantom{b}}{R^{\Delta \Delta}_{z_1,z_2}}%
	&\XYtable{-}{-}{-}{-}{a\vphantom{b}}{b}{a\vphantom{b}}{b}{R^{\Delta \Delta}_{z_1,z_2}}%
	&\XYtable{-}{-}{-}{-}{a\vphantom{b}}{a\vphantom{b}}{a\vphantom{b}}{a\vphantom{b}}{R^{\Delta \Delta}_{z_1,z_2}}%
	&\XYtable{+}{-}{+}{-}{0}{a\vphantom{b}}{0}{a\vphantom{b}}{R^{\Delta \Delta}_{z_1,z_2}}%
	\\[1ex]
	{}
	&{\footnotesize$z_{1}^n {-} v z_{2}^n$}
	&{\footnotesize$(1{-}v) z_{1}^{n-c} z_{2}^c$}~{\scriptsize$(\#)$}
	&{\footnotesize$g(a{-}b)(z_{1}^n {-} z_{2}^n)$}
	&{\footnotesize$z_{2}^n {-} v z_{1}^n$}
	&{\footnotesize$v(z_{1}^n {-} z_{2}^n)$}
	\\
	\cmidrule(l{0.25cm}){2-6}
	{}
	&\XYtable{-}{+}{+}{-}{a\vphantom{b}}{0}{0}{a\vphantom{b}}{R^{\Delta \Delta}_{z_1,z_2}}%
	&\XYtable{-}{+}{-}{+}{a\vphantom{b}}{0}{a\vphantom{b}}{0}{R^{\Delta \Delta}_{z_1,z_2}}%
	&\XYtable{+}{-}{-}{+}{0}{a\vphantom{b}}{a\vphantom{b}}{0}{R^{\Delta \Delta}_{z_1,z_2}}%
	&{}
	&{}
	\\[2ex]
	{}
	&{\footnotesize$(1{-}v) z_{1}^{n-a+1} z_{2}^{a-1}$}~{\scriptsize$(\ast \ast)$}
	&{\footnotesize$z_{1}^n {-} z_{2}^n$}
	&{\footnotesize$(1{-}v) z_{1}^{a-1} z_{2}^{n-a+1}$}~{\scriptsize$(\ast \ast)$}
	&{}
	&{}
	\\
	\midrule
	$\Delta \Gamma$
	&\XYtable{+}{+}{+}{+}{0}{a\vphantom{0}}{0}{a\vphantom{0}}{R^{\Delta \Gamma}_{z_1,z_2}}%
	&\XYtable{-}{-}{-}{-}{a\vphantom{0}}{0}{a\vphantom{0}}{0}{R^{\Delta \Gamma}_{z_1,z_2}}%
	&\XYtable{+}{-}{+}{-}{0}{0}{0}{0}{R^{\Delta \Gamma}_{z_1,z_2}}%
	&\XYtable{-}{+}{-}{+}{a\vphantom{b}}{b}{a\vphantom{b}}{b}{R^{\Delta \Gamma}_{z_1,z_2}}%
	&\XYtable{-}{+}{-}{+}{a\vphantom{b}}{b}{a\vphantom{b}}{b}{R^{\Delta \Gamma}_{z_1,z_2}}%
	\\[1ex]
	{}
	&{\footnotesize$z_{2}^n {-} v^n z_{1}^n$}
	&{\footnotesize$z_{2}^n {-} v^n z_{1}^n$}
	&{\footnotesize$z_{2}^n {-} v^{n+1} z_{1}^n$}
	&{\footnotesize$v^{n-1} z_{1}^n {-} z_{2}^n$}~{\scriptsize$(\dagger)$}
	&{\scriptsize$(\dagger \dagger)$}
	\\
	\cmidrule(l{0.25cm}){2-6}
	{}
	&\XYtable{-}{+}{+}{-}{b}{a\vphantom{b}}{0}{0}{R^{\Delta \Gamma}_{z_1,z_2}}%
	&\XYtable{-}{+}{-}{+}{a}{b}{c\vphantom{d}}{d}{R^{\Delta \Gamma}_{z_1,z_2}}%
	&\XYtable{+}{-}{-}{+}{0}{0}{a\vphantom{b}}{b}{R^{\Delta \Gamma}_{z_1,z_2}}%
	&{}
	&{}
	\\[2ex]
	{}
	&{\footnotesize$(1{-}v) v^{a-1} z_{1}^a z_{2}^{b-1}$}~{\scriptsize$(\|)$}
	&{\scriptsize$(\ddagger \ddagger)$}
	&{\footnotesize$(1{-}v) v^{a-1} z_{1}^{a-1} z_{2}^b$}~{\scriptsize$(\|)$}
	&{}
	&{}
	\\
	\midrule
	$\Gamma\Gamma$
	&\XYtable{+}{+}{+}{+}{a\vphantom{b}}{a\vphantom{b}}{a\vphantom{b}}{a\vphantom{b}}{R^{\Gamma \Gamma}_{z_1,z_2}}%
	&\XYtable{+}{+}{+}{+}{b}{a\vphantom{b}}{b}{a\vphantom{b}}{R^{\Gamma \Gamma}_{z_1,z_2}}%
	&\XYtable{+}{+}{+}{+}{b}{a\vphantom{b}}{a\vphantom{b}}{b}{R^{\Gamma \Gamma}_{z_1,z_2}}%
	&\XYtable{-}{-}{-}{-}{0}{0}{0}{0}{R^{\Gamma \Gamma}_{z_1,z_2}}%
	&\XYtable{+}{-}{+}{-}{a\vphantom{0}}{0}{a\vphantom{0}}{0}{R^{\Gamma \Gamma}_{z_1,z_2}}%
	\\[1ex]
	{}
	&{\footnotesize$z_{2}^n {-} v z_{1}^n$}
	&{\footnotesize$g(a{-}b) (z_{1}^n {-} z_{2}^n)$}
	&{\footnotesize$(1{-}v) z_{1}^c z_{2}^{n-c}$}~{\scriptsize$(\#)$}
	&{\footnotesize$z_{1}^n {-} v z_{2}^n$}
	&{\footnotesize$v(z_{1}^n {-} z_{2}^n)$}
	\\
	\cmidrule(l{0.25cm}){2-6}
	{}
	&\XYtable{-}{+}{-}{+}{0}{a\vphantom{0}}{0}{a\vphantom{0}}{R^{\Gamma \Gamma}_{z_1,z_2}}%
	&\XYtable{-}{+}{+}{-}{0}{a\vphantom{0}}{a\vphantom{0}}{0}{R^{\Gamma \Gamma}_{z_1,z_2}}%
	&\XYtable{+}{-}{-}{+}{a\vphantom{0}}{0}{0}{a\vphantom{0}}{R^{\Gamma \Gamma}_{z_1,z_2}}%
	&{}
	&{}
	\\[2ex]
	{}
	&{\footnotesize$z_{1}^n {-} z_{2}^n$}
	&{\footnotesize$(1{-}v) z_{1}^a z_{2}^{n-a}$}~{\scriptsize$(\ast \ast)$}
	&{\footnotesize$(1{-}v) z_{1}^{n-a} z_{2}^a$}~{\scriptsize$(\ast \ast)$}
	&{}
	&{}
	\\
	\bottomrule 
\end{tabularx}
\begin{tablenotes}
\vspace{-15pt}
\begin{multicols}{2}
{\footnotesize
\item[*] Weight: $v^2 z_{2}^n -  z_{1}^n$ if $2a \equiv 1 \pod{n}$; else, $g(2a - 1) (z_{1}^n - v z_{2}^n)$.
\item[\textdagger] Here $a + b \equiv 1 \pod{n}$.
\item[\textdaggerdbl] Here $a + b \not\equiv 1 \pod{n}$. Weight: $g(a + b - 1) (z_{1}^n - v z_{2}^n)$.
\item[\S] Here $a + b \equiv c + d \equiv 1 \pod{n}$, $a \not \equiv c \pod{n}$. Let $e \equiv a - c \pod{n}$ with $e \in [0, n - 1]$. Weight: $(v - 1) z_{1}^{n-e} z_2^{e}$ if $ad=0$ or if both $abcd \neq 0$ and $a > c$; $v(v - 1) z_{1}^{n-e} z_2^{e}$ if $bc=0$ or if both $abcd \neq 0$ and $a < c$. 
\item[\parbox{\widthof{$\!$\#}}{$\|$\hfil}\!] Here $a + b \equiv 1 \pod{n}$. Choose $a$ and $b$ in $[1,n]$.
\item[$\!$\#\!] Here $c \equiv a - b \pod{n}$ with $c \in [1,n - 1]$.  
\item[**] Choose $a$ in $[1, n]$. 
\item[\textdagger\textdagger] Here $a + b \not \equiv 1 \pod{n}$. Weight: $(z_{2}^n - v^n z_{1}^n)/g(a + b - 1)$.
\item[\textdaggerdbl\textdaggerdbl] Here $a + b \equiv c + d \equiv 1 \pod{n}$, $a \not \equiv c \pod{n}$. Let $e \equiv c - a \pod{n}$ with $e \in [1, n - 1]$. Weight: $(1 - v) v^{e-1} z_{1}^{e} z_{2}^{n-e}$.
}
\end{multicols}
\end{tablenotes}
\label{table:XYice}
\end{threeparttable}
\end{table}

One can now combine such tilted vertices with vertices of type $R^X_{z_1}$ and $R^Y_{z_2}$ in the rectangular grid discussed in the previous section, in order to create slightly more complicated systems consisting of three vertices. Two examples of such systems are shown below. 

 \vglue -20pt%
\begin{equation} \label{mybe}
\lhs{\botcharge{\sigma}{a}}{\topcharge{\tau}{b}}{\beta}{\topcharge{\theta}{c}}{\botcharge{\rho}{d}}{\alpha}{\topcharge{}{}}{\botcharge{}{}}{} 
\hspace{1cm}
\rhs{\botcharge{\sigma}{a}}{\topcharge{\tau}{b}}{\beta}{\topcharge{\theta}{c}}{\botcharge{\rho}{d}}{\alpha}{\botcharge{}{}}{\topcharge{}{}}{} 
\end{equation}
 \vglue 15pt%

\begin{definition}\label{definition:ybe}
We say that the triple $[R^{X Y}_{z_1, z_2}, R^{X}_{z_1}, R^{Y}_{z_2}]$ for $X$,~$Y \in \{ \Gamma, \Delta \}$ satisfies the Yang-Baxter equation if the partition functions of the two systems in (\ref{mybe}) are equal for any fixed spins $\alpha$, $\beta$, $\theta$, $\rho$, $\sigma$, $\tau$, and charges $a$, $b$, $c$, $d$.
\end{definition}

The following result is the main tool we use to prove Theorem \ref{theorem:Bprime}. 
\begin{theorem}[\cite{BBB}, \cite{Graythesis}]\label{theorem:ybe}
For any $X$,~$Y \in \{ \Gamma, \Delta \}$, let $R^X$ and $R^Y$ weights be as in Figure~\ref{mweights} and $R^{XY}$ weights as in Table~\ref{table:XYice}. Then the triple $[R^{X Y}_{z_1, z_2}, R^{X}_{z_1}, R^{Y}_{z_2}]$ satisfies the Yang-Baxter equation. 
\end{theorem}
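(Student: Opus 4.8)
The plan is to read Definition~\ref{definition:ybe} as an RLL-type relation and verify it by a reduction-then-matching argument, organized by the boundary data. In each system of \eqref{mybe} there are three lines: two \emph{auxiliary} horizontal lines carrying the $(n+1)$-dimensional spin-and-charge space (one of type $X$ with parameter $z_1$, one of type $Y$ with parameter $z_2$), and one \emph{quantum} vertical line carrying only a spin. The tilted vertex contributes the auxiliary-auxiliary weight $R^{XY}_{z_1,z_2}$, while the two rectangular vertices contribute the auxiliary-quantum weights $R^X_{z_1}$ and $R^Y_{z_2}$; the four charged external legs $\sigma,\tau,\theta,\rho$ (carrying $a,b,c,d$) are the ends of the auxiliary lines, and the two uncharged legs $\alpha,\beta$ are the ends of the quantum line. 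For fixed external data, each partition function in \eqref{mybe} is a finite sum over the labels of the three internal edges (two auxiliary, one quantum), and the goal is to show the two sums agree.

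First I would exploit the conservation laws built into the six-vertex rule and the charge rule to collapse these sums. The ice rule imposes spin conservation at every vertex, so the spin on the internal quantum edge is pinned down, up to at most a couple of choices, by the external spins: only the spin-reversing vertices $c_1,c_2$ of Figure~\ref{mweights} move a minus between the horizontal and vertical directions, and pairing them across the three crossings leaves few admissible internal spin patterns. With the spins fixed, the global charge rule (for $\Gamma$-ice, charge counts $+$ edges to the right mod $n$; for $\Delta$-ice, $-$ edges to the left) forces the charges on the internal auxiliary edges in terms of $a,b,c,d$, and the residual free charge parameters are exactly the symbols $a,b$ labeling the rows of Table~\ref{table:XYice}. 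After this reduction each side of \eqref{mybe} has only a handful of nonzero contributions.

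It then remains to check, for each admissible external boundary, that the two resulting expressions coincide as functions of $z_1,z_2$ and the Gauss sums. Here I would use only the two defining properties $g(0)=-v$ and $g(a)g(n-a)=v$ for $a\not\equiv 0$, together with telescoping of the charge-transport sums, to verify each polynomial identity. I expect the main obstacle to be the proliferation of degenerate cases indexed by congruences among the charges: the generic weights in Table~\ref{table:XYice} carry factors $g(a-b)$, $g(a+b-1)$, and $z_1^n - v z_2^n$ that collapse or change form precisely when $a+b\equiv 1$, $2a\equiv 1$, or $a\equiv b \pmod n$, which are exactly the situations singled out by the footnotes $(\ast),(\dagger),(\ddagger),(\S),(\dagger\dagger),(\ddagger\ddagger)$. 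Matching the low-charge terms of one side against the degenerate high-charge terms of the other, while tracking the powers of $v$ and the signs coming from the $\mathfrak{gl}(n|1)$ (``fermionic'') conventions, is the delicate combinatorial heart of the verification.

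Finally, I would record the conceptual reason the identity must hold, which also organizes the bookkeeping. For $X=Y=\Gamma$ all three weight families are matrix coefficients of the universal $R$-matrix of the Drinfeld twist of $U_{q^{-1/2}}(\widehat{\mathfrak{gl}}(n|1))$ evaluated on a triple of modules (two auxiliary evaluation modules and the quantum module), so the relation in \eqref{mybe} is the image of the universal quantum Yang-Baxter equation and holds automatically; this is the content of \cite{BBB}. To obtain the remaining cases $\Gamma\Delta$, $\Delta\Gamma$, $\Delta\Delta$ with minimal extra computation, I would realize $V_\Delta(z)$ as a second family of modules over the same quasitriangular Hopf algebra, related to $V_\Gamma(z)$ by the involution $\ddagger$ of Section~\ref{section:YangBaxterSystem} (equivalently by a module twist), so that each mixed relation is again a specialization of the single universal equation, as in \cite{Graythesis}. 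The direct calculation above then needs to be carried only far enough to confirm that the $\Delta$-module structure intertwines as claimed, after which all four combinations follow together.
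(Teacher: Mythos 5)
Your first three paragraphs describe essentially the argument the paper itself gives: fix the $64$ boundary spin configurations $(\alpha,\beta,\theta,\rho,\sigma,\tau)$, use the ice rule and the global charge rule to cut each side of \eqref{mybe} down to a handful of admissible internal configurations, and verify the resulting identities in $z_1,z_2$ using only the two properties $g(0)=-v$ and $g(a)g(n-a)=v$. That is exactly how the paper proceeds --- it defers the $X=Y=\Gamma$ case to \cite{BBB} and the remaining choices of $X$ and $Y$ to the appendix of \cite{Graythesis} --- so this part of your plan is correct, if laborious.

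The gap is in your final paragraph. For $X=Y=\Gamma$ the appeal to the universal $R$-matrix of the Drinfeld twist of $U_{q^{-1/2}}(\widehat{\mathfrak{gl}}(n|1))$ is legitimate and is indeed how \cite{BBB} argues. But for the $\Gamma\Delta$, $\Delta\Gamma$ and $\Delta\Delta$ cases you posit that $V_\Delta(z)$ can be realized as a second family of modules over the \emph{same} quasitriangular Hopf algebra, related to $V_\Gamma(z)$ by the involution $\ddagger$, so that the mixed relations become specializations of one universal equation. No such module structure is available at this point: the paper only obtains a bialgebra having $V_\Gamma$ and $V_\Delta$ as \emph{comodules} as a \emph{consequence} of these very Yang--Baxter equations, via the Freidel--Maillet generalization of the FRT construction in Section~\ref{section:YangBaxterSystem}, and it explicitly describes that construction as preliminary, saying nothing about its representation theory. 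Invoking it to prove Theorem~\ref{theorem:ybe} is therefore circular. Note also that the relation between the two mixed $R$-matrices is subtler than a module twist by $\ddagger$: by Theorem~\ref{byoybs}, $(R^{\Gamma\Delta})^\ddagger$ agrees with $(R^{\Delta\Gamma})^{-1}$ only up to a scalar, and the $\Delta\Gamma$ weights in Table~\ref{table:XYice} carry powers of $v$ (e.g.\ $v^{n}$, $v^{a-1}$) absent from the $\Gamma\Delta$ weights, so the $\Delta$ cases cannot be read off from the $\Gamma\Gamma$ case by a formal symmetry. To complete the proof you must actually carry out the case-by-case verification of your first three paragraphs for all four pairs $(X,Y)$, as is done in \cite{Graythesis}.
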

\begin{proof}
Having chosen the Boltzmann weights, the proof is computational. Fix $X$,~$Y \in \{\Gamma, \Delta \}$. Fix boundary spins $\alpha$,~$\beta$,~$\theta$,~$\rho$,~$\sigma$,~$\tau \in \{ +, - \}$ as in (\ref{mybe});  there are $64$ such cases. For each such choice, simply compute the two partition functions for all possible charges $(a,b,c,d)$ mod $n$, and show that they are equal. The case of $X = Y = \Gamma$ was proved in \cite{BBB} and we refer the reader to the Appendix of \cite{Graythesis} where all the computations for the remaining choices of $X$ and $Y$ are done in detail. 
\end{proof}

Using the solution of the Yang-Baxter equation we now prove Theorem \ref{theorem:Bprime}.

\begin{proof}[Proof of Theorem \ref{theorem:Bprime}]
Given $\lambda$, $\mu$, let $\smash{S_{\z, \lambda, \mu}^{\Gamma \Delta}}$ be the resulting two row system. 
Form a new system by attaching an $\smash{R_{z_1,z_2}^{\Delta \Gamma}}$-vertex to the right of $\smash{S_{\z,\lambda, \mu}^{\Gamma \Delta}}$, with the resulting right-hand boundary having both spins $-$, bottom right charge 0, and top right charge arbitrary. Regardless of the choice of top right charge, the only admissible choice for the vertex $\smash{R_{z_1,z_2}^{\Gamma \Delta}}$ has left spins both $-$ and Boltzmann weight $z_2^n-v^n z_1^n$ according to the second column in Table~\ref{table:XYice}. Therefore the partition function of the new system is $(z_2^{n}-v^n z_1^n) Z(\smash{S_{\z,\lambda, \mu}^{\Gamma \Delta}})$. Now apply what Faddeev calls the ``train argument'' (cf. \cite{Baxter, Faddeev}, or \cite{hkice} for depictions closely related to the present context) repeatedly applying the Yang-Baxter equation to move the tilted $R$-vertex from right to left, leaving the partition function unchanged. Once the tilted $R$-vertex has moved all the way to the left, it must have spins which are all $+$ and, according to the first column of Table~\ref{table:XYice}, has weight $z_2^n - v^n z_1^n$, independent of the relevant charges. We conclude that the identity of partition functions in the theorem holds when $z_2^n-v^n z_1^n \neq 0$. If $z_2^n-v^n z_1^n=0$, the result remains true by an easy continuity argument. 
\end{proof}

\section{Yang-Baxter Systems}\label{section:YangBaxterSystem}

\textit{Yang-Baxter systems} are generalizations of the Yang-Baxter equation
that may be used to construct bialgebras
(\cite{FreidelMaillet, HlavatyAlgebraic})
by a generalization of the FRT construction (\cite{FRT}).
We will show that the Yang-Baxter equations that we use in this
paper form a \textit{parametrized Yang-Baxter system}, which
falls in the framework described in \cite{HlavatyApplications}.
Theorem~\ref{byoybs} is then sufficient to construct
a bialgebra from Gamma and Delta ice.

Let us introduce the notation for {\it Yang-Baxter commutators}, the identities of endomorphisms appearing in parametrized Yang-Baxter equations:
 \[ \left\llbracket A, B, C \right\rrbracket = A_{12} (z_1, z_2) B_{13} (z_1,
   z_3) C_{23} (z_2, z_3) - C_{23} (z_2, z_3) B_{13} (z_1, z_3) A_{12} (z_1,
   z_2) \] 
as an endomorphism of the vector space $U \otimes V \otimes W$ with matrices $A \in \text{End}(U \otimes V)$, $B \in \text{End}(U \otimes W)$, and $C \in \text{End}(V \otimes W)$. The subscripts above indicate the factors in $U \otimes V \otimes W$ to which the endomorphism is applied (and is the identity on the third factor). A Yang-Baxter system is a set of four endomorphisms $A$, $B$, $C$, $D$ satisfying the system:
\begin{equation} 
\begin{aligned}
\left\llbracket A, A, A \right\rrbracket = 0, && \left\llbracket A, C, C \right\rrbracket = 0, && \bigl\llbracket A, B^\ddagger, B^\ddagger \bigr\rrbracket = 0,  && \bigl\llbracket A, C, B^\ddagger \bigr\rrbracket = 0, \\
\left\llbracket D, D, D \right\rrbracket = 0, && \left\llbracket D, B, B \right\rrbracket = 0, && \bigl\llbracket D, C^\ddagger, C^\ddagger \bigr\rrbracket = 0, && \bigl\llbracket D, B, C^\ddagger \bigr\rrbracket = 0, \nonumber
\end{aligned}
\end{equation}
where $A^\ddagger(z_1, z_2) := \tau A(z_2, z_1) \tau$ where $\tau$ is the ``flip''
operator $x\otimes y\mapsto y\otimes x$.

\begin{theorem}
  \label{byoybs}
  The matrices $R^{\Gamma \Gamma}, (R^{\Delta
    \Gamma})^{-1}, R^{\Gamma \Delta},$ and $(R^{\Delta \Delta})^\ddagger$ form a Yang-Baxter system with
  $(R^{\Gamma \Delta})^\ddagger = \text{\rm{constant}}\times (R^{\Delta \Gamma})^{-1}$.
\end{theorem}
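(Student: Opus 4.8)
The plan is to read off the intended dictionary $A = R^{\Gamma\Gamma}$, $B = (R^{\Delta\Gamma})^{-1}$, $C = R^{\Gamma\Delta}$, $D = (R^{\Delta\Delta})^{\ddagger}$ and to track, for each of the eight brackets, the triple tensor product on which it acts. Writing $V_{\Gamma}$ and $V_{\Delta}$ for the $(n+1)$-dimensional horizontal state spaces of Remark~\ref{remark:charge0}, one checks that the four equations in the top line live on $V_{\Gamma}\otimes V_{\Gamma}\otimes V_{\Delta}$ and those in the bottom line on $V_{\Delta}\otimes V_{\Delta}\otimes V_{\Gamma}$: indeed $C$ and $B^{\ddagger}$ are both endomorphisms of $V_{\Gamma}\otimes V_{\Delta}$, while $B$ and $C^{\ddagger}$ are both endomorphisms of $V_{\Delta}\otimes V_{\Gamma}$, so every bracket is well typed and the assignment is forced.

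I would prove first the two crossing relations that make the system consistent. The headline identity $(R^{\Gamma\Delta})^{\ddagger} = \kappa\,(R^{\Delta\Gamma})^{-1}$ unwinds, via the definition of $\ddagger$, to the unitarity statement $R^{\Delta\Gamma}(z_1,z_2)\,\tau R^{\Gamma\Delta}(z_2,z_1)\tau = \kappa\cdot\mathrm{id}$ on $V_{\Delta}\otimes V_{\Gamma}$; an entirely analogous unitarity $(R^{\Delta\Delta})^{\ddagger} = \rho\,(R^{\Delta\Delta})^{-1}$ holds for $\Delta\Delta$ ice. Each is proved either by the observation that the composite on the left intertwines a monodromy ordering with itself and is therefore scalar (by Theorem~\ref{theorem:ybe} together with the irreducibility discussed below), or directly from Table~\ref{table:XYice}, where the spin/charge vanishing of Remark~\ref{remark:charge0} reduces the check to a handful of configurations. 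Evaluating on the all-plus boundary state both pins down the constants $\kappa,\rho$ and shows they are nonzero, so that $R^{\Delta\Gamma}$ and $R^{\Delta\Delta}$ are invertible. Applying $\ddagger$ to the first relation, and using that $\ddagger$ is an involution and $\kappa$ a genuine constant, yields the companion relations $B^{\ddagger} = \kappa^{-1}C$ and $C^{\ddagger} = \kappa B$.

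These proportionalities collapse the system. Because the Yang--Baxter commutator $\llbracket\,\cdot\,,\,\cdot\,,\,\cdot\,\rrbracket$ is homogeneous of degree one in each of its three slots, substituting $B^{\ddagger} = \kappa^{-1}C$ and $C^{\ddagger} = \kappa B$ gives
\[
\begin{aligned}
\llbracket A, B^{\ddagger}, B^{\ddagger}\rrbracket &= \kappa^{-2}\,\llbracket A, C, C\rrbracket, & \llbracket A, C, B^{\ddagger}\rrbracket &= \kappa^{-1}\,\llbracket A, C, C\rrbracket,\\
\llbracket D, C^{\ddagger}, C^{\ddagger}\rrbracket &= \kappa^{2}\,\llbracket D, B, B\rrbracket, & \llbracket D, B, C^{\ddagger}\rrbracket &= \kappa\,\llbracket D, B, B\rrbracket,
\end{aligned}
\]
so it suffices to establish the four core relations $\llbracket A,A,A\rrbracket = \llbracket A,C,C\rrbracket = \llbracket D,D,D\rrbracket = \llbracket D,B,B\rrbracket = 0$.

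Finally I would obtain these four from Theorem~\ref{theorem:ybe} by the train argument used in the proof of Theorem~\ref{theorem:Bprime}, now with three auxiliary rows instead of one. Forming row monodromy matrices $T^{X}(z)$ out of the Lax vertices of Figure~\ref{mweights}, the equation of Theorem~\ref{theorem:ybe} iterates along a row to $R^{XY}_{12}\,T^{X}_{1}T^{Y}_{2} = T^{Y}_{2}T^{X}_{1}\,R^{XY}_{12}$. Reordering a product $T^{X}_{1}T^{Y}_{2}T^{Z}_{3}$ into $T^{Z}_{3}T^{Y}_{2}T^{X}_{1}$ along the two reduced words for the longest element of $S_{3}$ produces the two operators $R^{XY}_{12}R^{XZ}_{13}R^{YZ}_{23}$ and $R^{YZ}_{23}R^{XZ}_{13}R^{XY}_{12}$, each of which intertwines the same pair of monodromy orderings; taking $(X,Y,Z)$ equal to $(\Gamma,\Gamma,\Gamma)$, $(\Gamma,\Gamma,\Delta)$, $(\Delta,\Delta,\Delta)$ and $(\Delta,\Delta,\Gamma)$ yields the corresponding bare braided Yang--Baxter equations. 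For the first two these are literally $\llbracket A,A,A\rrbracket = 0$ and $\llbracket A,C,C\rrbracket = 0$; the last two become $\llbracket D,D,D\rrbracket = 0$ and $\llbracket D,B,B\rrbracket = 0$ once the bare matrices are converted to $(R^{\Delta\Delta})^{\ddagger}$ and $(R^{\Delta\Gamma})^{-1}$ using the crossing relations and the fact that inversion and $\ddagger$ each carry a solution of the braided Yang--Baxter equation to another solution. The main obstacle is the uniqueness step inside the train argument: to conclude that the two reordering operators coincide, rather than merely agreeing up to a scalar, one must know that an operator commuting with all the monodromy products $T^{X}_{1}T^{Y}_{2}T^{Z}_{3}$, as the spectral parameters and the quantum-row boundary data vary, is forced to be scalar, i.e. a suitable irreducibility of the monodromy action. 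I expect this to be the delicate point; it can be bypassed, at the cost of a longer but purely mechanical argument, by verifying the four core relations directly against Table~\ref{table:XYice}, where Remark~\ref{remark:charge0} again sharply limits the nonzero terms.
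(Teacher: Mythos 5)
Your proposal follows essentially the same route as the paper's proof: the paper likewise derives the required bracket identities by attaching a tilted vertex and running the train argument powered by Theorem~\ref{theorem:ybe} (handling the ``equal, not just proportional'' issue you rightly flag by an argument varying the Langlands parameters, for which it defers to arXiv version~2 of \cite{BBB}), and likewise verifies $(R^{\Gamma\Delta})^\ddagger = \mathrm{const}\times(R^{\Delta\Gamma})^{-1}$ directly from Table~\ref{table:XYice}; for $\llbracket R^{\Gamma\Gamma},R^{\Gamma\Gamma},R^{\Gamma\Gamma}\rrbracket=0$ it also offers the shortcut of citing Theorem~4 of \cite{BBB}, where this is obtained from the $R$-matrix of quantum affine $\mathfrak{gl}(n|1)$. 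Your reduction of the eight equations to four core ones via the crossing relations is a tidy bookkeeping step the paper does not spell out (it merely asserts the train argument ``applies equally well'' to the remaining equations). One small caution: the paper records the proportionality constant as depending on $\mathbf{z}$, so in your substitution $B^\ddagger=\kappa^{-1}C$, $C^\ddagger=\kappa B$ the factor $\kappa$ should be treated as a scalar function of the two spectral parameters in each slot (which still factors out of the multilinear bracket) rather than a single absolute constant.
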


\begin{proof}
  The identity
\begin{equation}
  \label{secondybe}
  \bigl\llbracket R^{\Gamma \Gamma}, R^{\Gamma \Gamma},
  R^{\Gamma \Gamma} \bigr\rrbracket = 0
\end{equation}
is Theorem~4 of~\cite{BBB}, where it is deduced from the known $R$-matrix of quantum affine
  $\mathfrak{gl}(n|1)$. 
  
  Alternatively, one may deduce the relations needed for the
  Yang-Baxter system from Theorem~\ref{theorem:ybe}.
 For example, pick the configuration on either side of Theorem~4 of~\cite{BBB}
and attach it to the left of a state like the one in Figure
\ref{fig:mice}. One can then use the solutions of Yang-Baxter
equation in Theorem \ref{theorem:ybe} and the train argument to
``move'' the attached ice-system to the right, where it transforms into
the second configuration. This is sufficient to prove
(\ref{secondybe}), which may be deduced by an argument that involves
varying the Langlands parameters. For this argument, see arxiv
version~2 of~\cite{BBB}. This idea applies equally well to the other
Yang-Baxter equations needed for the Yang-Baxter system.

Up to a constant depending on
  $\mathbf{z}$, the fact that $(R^{\Gamma \Delta})^\ddagger$ equals
  $(R^{\Delta \Gamma})^{-1}$, may be checked using the weights given
  in Table~\ref{table:XYice}.
\end{proof}

\bibliographystyle{halpha} \bibliography{statementB.bib}
\end{document}